\newtheorem{theorem}{Theorem}
\newtheorem{lem}{Lemma}
\begin{document}

\title{Fault-tolerant quantum gates with defects in topological stabiliser codes}
\author{Paul Webster}
\affiliation{Centre for Engineered Quantum Systems, School of Physics, The University of Sydney, Sydney, NSW 2006, Australia}
\author{Stephen D.~Bartlett}
\affiliation{Centre for Engineered Quantum Systems, School of Physics, The University of Sydney, Sydney, NSW 2006, Australia}

\date{10 August 2020}

\begin{abstract}
Braiding defects in topological stabiliser codes has been widely studied as a promising approach to fault-tolerant quantum computing. Here, we explore the potential and limitations of such schemes in codes of all spatial dimensions. We prove that a universal gate set for quantum computing cannot be realised by supplementing locality-preserving logical operators with defect braiding, even in more than two dimensions. However, notwithstanding this no-go theorem, we demonstrate that higher dimensional defect-braiding schemes have the potential to play an important role in realising fault-tolerant quantum computing. Specifically, we present an approach to implement the full Clifford group via braiding in any code possessing twist defects on which a fermion can condense. We explore three such examples in higher dimensional codes, specifically: in self-dual surface codes; the three dimensional Levin-Wen fermion mode; and the checkerboard model.  Finally, we show how our no-go theorems can be circumvented to provide a universal scheme in three-dimensional surface codes without magic state distillation. Specifically, our scheme employs adaptive implementation of logical operators conditional on logical measurement outcomes to lift a combination of locality-preserving and braiding logical operators to universality.
\end{abstract}

\maketitle
\section{Introduction}
\label{Sec:Introduction}
Quantum computers can solve certain problems more efficiently than their classical counterparts.  However, the fragility of quantum coherence against noise means that quantum computers will likely require error correction to function on a large scale. Specifically, this necessitates a universal set of logical operators (gates) that can be implemented \emph{fault-tolerantly}, meaning that typical errors remain controlled and correctable throughout the computation. Developing architectures with such a universal, fault-tolerant gate set is challenging, because the most natural approach to fault-tolerant operators, \textit{transversality}, cannot yield a universal set in any quantum error correcting code \cite{Eastin}. 

Topological stabiliser codes  \cite{Bravyi,Pastawski,Webster,BombinNoBraid,Vasmer,BombinGCC,Kubica,Jochym,Campbell,Roberts,Bombin2018,BrownJIT} are a widely-studied and extremely promising class of codes, because they provide protection against general local errors, and their realisation by local Pauli Hamiltonian models provides a route to experimental feasibility. However, even stronger constraints apply to fault-tolerant logical operators in topological stabiliser codes. In particular, locality-preserving logical operators, which are a much more general class than transversal operators, have been proven to be insufficient for universality in any topological stabiliser code \cite{Bravyi,Pastawski,Webster}. A number of ways to construct a universal gate set in topological stabiliser codes have been proposed, including magic state distillation \cite{MSD}, stabiliser state injection \cite{Vasmer}, dimensional jumping \cite{BombinGCC} and just-in-time gauge fixing \cite{Bombin2018,BrownJIT}. However, it remains an open question as to what approach is best in terms of maximising error correction thresholds and minimising overheads. Thus, there is significant value in investigating approaches to universality that are naturally topological.

In this paper, we explore the possibilities and limitations of a natural class of fault-tolerant logical operators in topological stabiliser codes that is more general than locality-preserving logical operators. Specifically, we allow for operators that may be implemented by braiding topological defects. Such operators have been extensively studied in two dimensional codes, with a large range of schemes proposed using holes \cite{Raussendorf1,Raussendorf2,BombinHole,Fowler1,Fowler2,Fowler3,Hastings,Brell,Brown} and twist defects \cite{Brown,Bombin,Bombin2,Kesselring,Scruby,Cong,Yoder,BarkeshliGenon1,BarkeshliGenon2,BarkeshliTheory1,BarkeshliTheory2}. It is known that there exist codes for which this set of logical operators implementable by braiding defects is larger than that of locality-preserving logical operators admitted by the code \cite{Brown,Bombin}. One may thus hope that braiding defects might allow the limitations of locality-preserving logical operators to be overcome. In particular, defects allow abelian models to exhibit non-abelian braiding statistics.  Because there are known examples of non-abelian braiding models that allow for universal fault-tolerant quantum computing \cite{Freedman,Kitaev}, we may wonder if there exist topological stabiliser codes with defects that similarly allow for universality. This possibility seems all the more promising since there are known cases of topological models that do not admit a universal set of logical operators fault-tolerantly, but that do allow for universality when genons (a type of twist defect) are introduced and braided \cite{BarkeshliGenon1}.

Our first result is to show that this approach is fundamentally limited.  Specifically, we prove that the set of operators implementable by braiding defects in a topological stabiliser code cannot be universal.  In addition, this proof extends to the more general case where defect braiding operations are combined with locality-preserving logical operators.  Our no-go result is broad enough to encompass braiding of exotic defects associated with non-Pauli Hamiltonian terms, which locally violate the nature of defect-free stabiliser codes as commuting Pauli models, and which can not generally be braided by using Pauli measurements.  We present the theoretical background underpinning these results in Sec.~\ref{Sec:Theory}. In particular, we review topological stabiliser codes and their properties. We then discuss defects in such codes and how quantum information may be encoded and manipulated using such defects. In Sec.~\ref{Sec:Limitations}, we present and discuss the results constraining the potential of these logical operators. We also show that under appropriate assumptions defect braiding cannot be used to implement non-Clifford logical operators, which provides an even stronger bound within its more restricted domain of applicability.

Having established the limitations in the power of braiding defects, we then consider a number of encodings that saturate and even circumvent these limits. Specifically, in Sec.~\ref{Sec:Schemes} we explore encodings that allow for the full Clifford group to be implemented. We begin this exploration with a review of the well-known example of twists in the two dimensional surface code \cite{Bombin,Brown}. By adapting this scheme, we then present three novel schemes for the Clifford group by braiding in higher dimensional topological stabiliser codes.  These are the first proposed examples of defect-braiding schemes in higher dimensional codes.  They are: a class of even-dimensional surface codes; the three dimensional Levin-Wen fermion model \cite{Levin,Webster} (a three-dimensional topological stabiliser code with point-like fermionic excitations); and the checkerboard model (a code that exhibits fracton excitations) \cite{Vijay,Shirley1,Dua}.  Even in light of our constraints on the power of braiding defects in such encodings, these examples are still worthy of study for a number of reasons. In particular, such higher dimensional schemes can exhibit properties, such as single-shot error correction and self-correction, that cannot be realised in two dimensional schemes. Additionally, supplementing such schemes with magic state distillation provides an approach to achieving a fault-tolerant universal gate set in a more exotic range of codes such as fracton models. They also provide examples of braiding phenomena that are unique to higher-dimensional codes, and may provide inspiration for schemes in more exotic higher dimensional models that are not constrained by our results. Finally, they also provide illustrations of the types of schemes that are constrained by our no-go theorems, beyond the two dimensional schemes studied in previous work.

Finally, we propose a potential mechanism to achieve universality within a topological stabilizer code by adding minimal resources that can circumvent the above no-go theorems.    Specifically, in Sec.~\ref{Sec:Universal Scheme}, we develop a scheme based on encoding qubits in punctures in the three dimensional colour code. By supplementing locality-preserving and braiding logical operators with operators that are conditional on logical measurements, we circumvent the constraining results of Sec.~\ref{Sec:Limitations}. Such conditioning uses non-local classical processing, which is known to allow for previous no-go theorems such as the Bravyi-K{\"o}nig bound to be overcome \cite{BombinSingleShot}. Our scheme is similar in inspiration to that proposed in Ref.~\cite{Vasmer}, with the advantage that braiding defects allows for a universal gate set to be implemented on arbitrarily many logical qubits within a single colour code block without the need for lattice surgery. This scheme illustrates how the limitations on the power of braiding defects in topological stabiliser codes may be overcome.

\section{Topological Stabiliser Codes and Defects}

\label{Sec:Theory}

We begin this section by briefly reviewing the relevant structure and properties of topological stabiliser codes, focussing on the structure of excitations of such codes and locality-preserving logical operators (an important type of fault-tolerant logical operator in such codes).  We then turn to the theory of defects in topological stabiliser codes. Specifically, we define the notion of topological defects, and survey several examples.  We then describe how defects can be used to encode information, the properties of the encoding and finally how braiding defects allows for fault-tolerant logical operators.

\subsection{Topological Stabiliser Codes}
\label{Sec:TSC}

A topological stabiliser code is defined on a lattice of physical qubits in $D \geq 2$ spatial dimensions. To construct a topological stabiliser code, we begin by specifying a stabiliser group of Pauli operators, $\mathcal{S}$. We choose this group so that the following conditions hold. First, it must be abelian and not contain $-I$. Second, it must admit some generating set, $S=\{S_i\} \subseteq \mathcal{S}$, such that all the $S_i$ operators are local with respect to the arrangement of qubits in the lattice. This ensures that the interactions that must be created during encoding are local and that error correction can be performed using only local measurements. Third, any operator that commutes with all elements of $\mathcal{S}$ but is not in $\mathcal{S}$ must be non-local and have support that is topologically non-trivial (that is, it cannot be smoothly deformed to a point). The codespace is then defined to be the +1-eigenspace of $\mathcal{S}$, i.e.,~the set of states stabilised by all elements of $\mathcal{S}$.

For such a topological stabiliser code, we can construct a local Hamiltonian that possesses this codespace as its degenerate ground space. Specifically, a Hamiltonian consisting of commuting Pauli terms may be associated with any stabiliser code by choosing a set of stabiliser generators, $S$, which we can use to define the Hamiltonian $\mathcal{H}=-\sum_{S_i \in S} S_i$. The ground space of this Hamiltonian is the common $+1$-eigenspace of all elements of $S$, which is equivalent to the codespace (i.e.~the space stabilised by $S$). For a topological stabiliser code, the generating set $S$ may be chosen to contain only local operators.  (This is not in general possible for an arbitrary stabiliser code.)

The locality of Hamiltonian terms in this model ensures that local operators, such as those which would describe natural couplings to an environment, can only create local excitations.  In contrast, the degenerate ground states of a topological stabilizer code can only be transformed by operators with support on a manifold (i.e.,~a path or surface) that is topologically non-trivial and cannot be continuously deformed to a point.  Because of this property, a topological code is robust against local noise, and any local error on a topological stabiliser code is correctable. This very general protection against general local noise makes topological stabilizer codes well-suited for use as quantum memories.   

Another useful feature of topological stabilizer codes is that they define a code family on lattices of varying size.  The distance of the code is typically a function of the lattice dimensions.  Choosing a sufficiently large lattice allows for an arbitrarily large code distance to be achieved, while locality of stabiliser generators ensures that the weight of Hamiltonian terms remain constant in this distance. This allows for the effective noise rate on encoded information to be reduced arbitrarily low, provided the physical noise rate is below a particular threshold value \cite{Aharanov,Terhal}.

\subsubsection{Locality-Preserving Logical Operators}
\label{LPLO}
Logical operators act nontrivially on the degenerate ground space of the code, and for a topological code they must have support on a topologically non-trivial manifold.  We now consider the effect of a logical operator acting on a topological stabilizer code, viewed as a quantum logic gate.  To ensure the protection from errors provided by the code is maintained, we restrict our attention to the set of logical operators that can be implemented \textit{fault-tolerantly}. Specifically, this requires that, if implemented on a code where a local error is present, this error must remain local after applying the logical operator. Since local errors in topological stabiliser codes are necessarily correctable, this ensures that correctable errors remain correctable after each action of a logical operator. 

A locality-preserving logical operator is a logical operator that preserves the locality of errors present in the code. More precisely, a logical operator, $\bar{L}$ is locality-preserving if it can be applied by a unitary operator that grows the support of any local operator by at most some constant under conjugation. We note that all transversal gates are necessarily locality-preserving logical operators (since they do not grow the support of operators within a single codeblock). In particular, since logical Pauli operators are transversal in stabiliser codes \cite{Gottesman}, this implies that they are locality-preserving logical operators.

Locality-preserving logical operators may be considered the most direct type of fault-tolerant logical operators that may be implemented in a topological code. They have been extensively studied (see Refs.~\cite{Bravyi,Pastawski,Webster}) and a full classification for a large class of topological stabiliser codes is known \cite{Webster}. We direct the reader to Ref.~\cite{Webster} for further detail on the properties and limitations of locality-preserving logical operators in such codes.

\subsection{Topological Excitations}
\label{TEs}
Excitations of topological stabiliser codes have exotic properties, such as non-trivial braiding statistics, due to the topological ordering of the model. In two spatial dimensions, the localised excitations of a topological stabiliser code are anyons. Each type of anyon may be distinguished by its fusion and braiding rules, and collectively these rules define the anyon model. The surface code provides the simplest example of such an anyon model, admitting four types of anyons: the vacuum ($1$), electric charge ($e$), magnetic flux ($m$) and a composite excitation produced by fusing a charge and a flux ($em$) \cite{Kitaev2}.

When considering higher dimensional codes, it is useful to generalise the notion of an anyon to a broader class of \textit{topological excitations}. Topological excitations are those that admit a well-defined notion of braiding. More precisely, for any topological excitation, there must be at least two spatial dimensions in which the excitation is localised and can freely be propagated by the action of local unitary operators. 

Topological excitations in $D \geq 3$ dimensions are much richer than anyons in two dimensions: they need not be point-like, but may take the form of extended objects; higher-dimensional models can display fracton order; and stable topological excitations that are not energy eigenstates can exist. We briefly describe each of these exotic phenomena.

\subsubsection{Higher Dimensional Topological Excitations} 
First, for a topological stabilizer code in $D\geq 3$ spatial dimensions, excitations need not be point-like to exhibit topological properties analogous to those of anyons. Indeed, excitations may be of any spatial dimension $j$ such that $0\leq j \leq D-2$ and still allow for topological phenomena such as braiding. We refer to such generalised versions of anyons as \textit{eigenstate excitations} \cite{Webster}, to refer to the fact that they are energy eigenstates.  For example, the three dimensional surface code has both electric and magnetic excitations, but with one-dimensional magnetic flux loops along with point-like electric charges. While point-like excitations can only fuse to the vacuum in pairs, higher dimensional excitations can individually contract to the vacuum provided they form closed loops or surfaces (or more generally, hypersurfaces). 

Topological excitations in topological stabiliser codes can be understood as boundaries of locality-preserving logical operators. Specifically, define a \textit{restricted locality-preserving logical operator} to be the restriction of a locality-preserving logical operator to a compact region, such that it has a boundary in the bulk of the code. Applying a restricted locality-preserving logical operator gives rise to an excitation localised to the boundary \cite{Yoshida,Webster}. If this excitation is free to move in at least two spatial dimensions, then it will be a topological excitation.

This relationship is very familiar for eigenstate excitations. For example, in the two dimensional surface code, a logical Pauli operator is implemented by a string of Pauli operators between an appropriate pair of boundaries or defects, or across a topologically non-trivial loop. Restricting a string of this kind so that it has endpoints in the bulk of the code gives rise to a pair of anyons at these endpoints. More generally, a restricted logical Pauli operator has an excitation at its boundary. This excitation will be an eigenstate excitation, since the Pauli operators that give rise to it necessarily commute or anticommute with each stabiliser meaning that the state that results from their application will be an energy eigenstate. Since logical Pauli operators are necessarily locality-preserving (as discussed in Sec.~\ref{LPLO}), this is an instance of the general phenomenon described above.

\subsubsection{Fracton Order} 
Higher dimensional models may also exhibit fracton order, with a variety of possible associated phenomena~\cite{Chamon,HaahCubic,Vijay,Dua}. In particular, such models may admit excitations known as \textit{fractons} that are point-like, but cannot move freely throughout the code without giving rise to additional point-like excitations \cite{Dua}. Due to their lack of mobility, analysing topological properties of such excitations is a challenge. However, models with fracton order can also admit point-like excitations that are free to move within a plane, but unable to move otherwise without producing additional excitations. Such excitations are referred to as \textit{planons}  \cite{Dua,Shirley1,Shirley2}. Planons, and higher dimensional generalisations, are examples of topological excitations that have no analogue in two dimensional codes, and add to the rich structure of excitations in higher dimensional codes.

\subsubsection{Non-Eigenstate Excitations} 
Another feature of topological stabiliser codes in $D\geq 3$ spatial dimensions is the possibility for stable topological excitations that are not energy eigenstates, but rather \textit{non-eigenstate excitations} of at least one dimension \cite{Else,BombinS,Yoshida,YoshidaB,YoshidaC,Webster,Kubica2018}. The existence of such excitations has been inferred independently from multiple different considerations. In particular, we note three complementary perspectives on this class of excitations. Mathematically, the existence of \textit{Cheshire charge} -- a delocalised charge across a loop-like excitation that cannot be decomposed into local charges -- is implied by analysis of three dimensional topological stabiliser codes as corresponding to braided fusion 2-categories \cite{Else}. Excitations carrying Cheshire charge are non-eigenstate excitations. Independently, study of the action of non-Clifford locality-preserving logical operators on Pauli errors in topological stabiliser codes has led to the identification of \textit{linking charges} \cite{BombinS}. These linking charges are also equivalent to non-eigenstate excitations (and, indeed, carry Cheshire charge). The route to understanding non-eigenstate excitations that is most relevant to our work, however, is due to Yoshida \cite{Yoshida,YoshidaB,YoshidaC}. From this perspective, non-eigenstate excitations are best understood as boundaries of non-Pauli locality-preserving logical operators.

As an example, consider the three-dimensional colour code, which admits a two-dimensional locality-preserving logical Clifford phase operator, $\bar{S}$ \cite{BombinNoBraid,Yoshida}. A restricted $\bar{S}$ locality-preserving logical operator has a loop-like one-dimensional boundary, and an excitation is created on this boundary. This excitation is not an energy eigenstate, since the physical Clifford operators that give rise to this excitation are superpositions of Pauli operators that neither commute nor anticommute with the $X$-type stabilisers at the boundary of the region. This excitation is a superposition of vacuum and point-like $e$ excitations along the one-dimensional boundary.  Note that this excitation is stable, in that it cannot be projected onto an eigenstate excitation by local operators -- such projection requires operators to act on its entire interior (a type of SPT protection \cite{Yoshida}). For this reason, it must be considered a distinct excitation that cannot be realised by simply assembling a configuration of eigenstate excitations, just as its corresponding locality-preserving logical operator is a superposition of logical Pauli operators but cannot be realised by a product of Pauli operators. As with eigenstate excitations, this non-eigenstate excitation can freely move through the code and can be characterised by its braiding statistics \cite{Yoshida}. It must thus be considered a distinct type of topological excitaiton.

More general non-eigenstate excitations can be understood similarly. Specifically, non-eigenstate excitations arise at the boundary of non-Pauli restricted locality-preserving logical operators. Since non-Pauli locality-preserving logical operators are necessarily of dimension $k\geq 2$ \cite{Bravyi,Webster}, the boundaries of such restricted locality-preserving logical operators cannot be point-like, and so can only support topological excitations in codes of more than two dimensions. As noted, general non-eigenstate excitations arise at boundaries of non-Pauli restricted locality-preserving logical operators to a region. This excitation cannot be an energy eigenstate, since non-Pauli operators necessarily have some Pauli operator with which they neither commute nor anticommute. However, as with the above example, such operators cannot be locally decomposed into eigenstate excitations. In addition, non-eigenstate excitations can be characterised by their braiding statistics with eigenstate excitations, that correspond to their (nested) commutation relations between their corresponding logical operators and logical Pauli operators \cite{Yoshida}.

\subsection{Examples of Defects in Topological Stabiliser Codes}
\label{DefectsinTSC}
By way of introduction, in this section we survey several examples of topological defects in topological stabiliser codes. The aim of this review is to provide the reader with an appreciation for the range of known topological defect schemes to which our work applies. We defer a formal definition of defects and defect encodings to Sec.~\ref{Sec:Encoding}

We emphasise that our survey below does not exhaust the full range of defects, encodings and braiding schemes that are possible in topological stabiliser codes. We further emphasise that our results presented in Sec.~\ref{Sec:Limitations} do not depend on the specific properties of any of the example topological defects surveyed below, and indeed apply to all topological defects that satisfy the definition in Sec.~\ref{Sec:Encoding}.

\subsubsection{Holes}
A \textit{hole} (or \textit{puncture}) is a compact region of the code where the stabilisers are removed from the Hamiltonian (so that this region is equivalent to the vacuum). The resulting code boundary may condense topological excitations, in which case the hole is a topological defect. For example, the surface code admits two types of boundaries; \textit{rough} boundaries which can condense electric charge ($e$) excitations and \textit{smooth} boundaries which can condense magnetic flux ($m$) excitations \cite{BravyiKitaev}. This code correspondingly admits two types of holes (those with rough and smooth boundaries) that condense $e$ and $m$ type excitations, respectively. The boundaries of holes are closed loops (or, more generally, hypersurfaces) of some finite size, independent of the size of the code. This compactness allows holes to be moved around through the bulk of the code, allowing for braiding. 

To illustrate how such holes can be used to encode quantum information, consider the example of the two dimensional surface code. As discussed above, holes in a two dimensional surface code can be constructed to have rough boundaries that allow $e$ excitations, but not $m$ excitations, to condense. Consider introducing a pair of holes of this type into a surface code. Assume that there is no total net charge across the pair. We may then associate a logical qubit with the charge in the region of either of the holes; i.e.~the parity of $e$ excitations in this region. Provided the holes are separated by a distance, $d$, this parity can only be changed by an operator of weight at least $d$.  (Here and in what follows, all distances are measured in terms of number of qubits along a path in the lattice.) Similarly, provided each hole has a circumference (i.e.~shortest path enclosing the hole) of at least $d$, the parity can only be distinguished by an operator of weight at least $d$. We can thus define a logical qubit with computational basis states $|\bar{0}\rangle$ and $|\bar{1}\rangle$ corresponding to even and odd parity of $e$ excitations respectively in the region of either hole.

The Pauli operators for this logical qubit are then naturally transversal. Specifically, a logical $\bar{X}$ operator may be implemented by transferring an $e$ excitation between the pair of holes, using a string of $Z$ operators between the holes. A logical $\bar{Z}$ operator may be implemented by braiding an $m$ excitation around either hole, since this results in a phase of $-1$ if and only if the parity of $e$ in this region is odd. This operation corresponds to a string of $X$ operators around a hole. These logical operators are illustrated in Fig.~\ref{Holes}.

\begin{figure}
\centering
\includegraphics{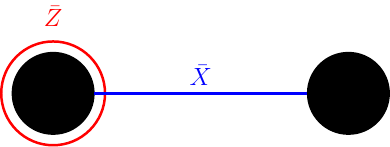}
\caption{An illustration of encoding a logical qubit in a pair of `rough boundary' holes in a 2D surface code. Blue lines represent the path of an excitation, $e$, that can condense at the boundary of a hole. Red lines represent paths of excitation $m$. The Pauli logical operators for this encoded qubit are illustrated. \label{Holes}}
\end{figure}

Schemes for realising a universal set of logical operators on logical qubits encoded in holes have been developed for both the two dimensional surface \cite{Fowler2} and colour codes \cite{Fowler3}. These schemes use braiding to achieve an entangling CNOT operator between logical qubits. However, these braiding operations must then be supplemented in two ways. First, to realise single qubit logical operators, braiding must be supplemented by locality-preserving logical operators. These can be used to achieve a Hadamard operator in the surface code \cite{Fowler2} and the full single qubit Clifford group in the colour code \cite{Fowler3}. Second, locality-preserving and braiding logical operators must be supplemented by some additional type of operator to realise universality. In both the surface and colour code this may be done by using magic state distillation to implement a non-Clifford gate such as the $\bar{T}$ gate \cite{Fowler2,Fowler3}.

\subsubsection{Domain Walls and Twists}
\label{Sec:Twists}

A second type of defect arises from considering boundaries between two copies of the same code. Such boundaries are referred to as domain walls. Unlike code boundaries, they can be transparent, meaning they do not absorb topological excitations, but instead may transform them from one type of topological excitation into another. They may be introduced by applying a restricted locality-preserving logical operator to a region of the code.  Domain walls in a $D$-dimensional code may be $k$-dimensional for any $k$  such that $1\leq k<D$ \cite{Yoshida,Webster}. 

As a simple example of a domain wall, consider the two dimensional colour code. A set of local stabilizer generators of this code are given by products of $X$ operators and products of $Z$ operators around each plaquette of the lattice \cite{ColourCode}.  The two dimensional colour code possesses a symmetry given by the action of the Hadamard operator, which interchanges $X$ and $Z$ operators, on all qubits. If this symmetry is applied to some compact region of the code, then the result is a gapped boundary around this region -- a domain wall. This boundary is a defect, since some but not all of the qubits in the support of stabilisers that lie along it are acted on by the symmetry. This means that these stabilisers will have a different structure after the symmetry has been applied from stabilisers in the rest of the code, and so the symmetry breaks the translational invariance of the code. Excitations that cross this domain wall are transformed in accordance with the symmetry applied to the interior region. For example, in the two dimensional colour code with Hadamard symmetry, $e$ and $m$ excitations correspond to $Z$ and $X$ type errors respectively. These excitations are thus interchanged upon crossing the wall into the region where $Z$ and $X$ have been interchanged.

As described so far, however, domain walls do not allow for the condensation of excitations, and so do not qualify as topological defects by our definition. Such condensation is made possible by terminating domain walls at boundaries of their own.  Such domain wall boundaries form a type of topological defect, called twists. For details on how to construct such twists, we refer readers to the procedure presented in Ref.~\cite{BarkeshliTheory2}.

To see that twists allow topological excitations to condense, and so are topological defects, consider an example using the Hadamard symmetry in the two dimensional colour code.  A pair of $e$ excitations may be created from the vacuum, and one of these excitations brought across the domain wall, converting it to an $m$. This $e$ and $m$ may then be brought together by allowing them to meet beyond the twist, producing the composite excitation $em$.  Thus, this twist in the two dimensional colour code can condense $em$ excitations.  More generally, domain walls may be classified by their action on excitations of the code. To see this, denote by $b^*$ an excitation that annihilates to the vacuum with $b$. (If $b$ is a quasiparticle, $b^*$ is the antiparticle of $b$. More generally, we refer to $b^*$ as the anti-excitation of $b$.) A domain wall that takes $a \to b$ will then allow composite excitation $a^*b$ to condense. This process is illustrated in Fig.~\ref{TwistCondensation}.

\begin{figure}
\centering
\includegraphics{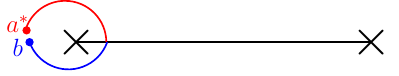}
\caption{The condensation of excitations at domain walls with twists. The black line is a domain wall that transforms excitation $a$ to excitation $b$. Here, a (red) particle-antiparticle pair, $a,a^*$, condenses in the vacuum. The excitation $a$ crosses the wall to become (blue) excitation, $b$, while $a^*$ does not. Both then come together to give the excitation $a^*b$. Note that without the twists, such condensation would not be possible since excitations could then only come together if the number of times they crossed the domain wall was of the same parity. \label{TwistCondensation}}
\end{figure}

Schemes for quantum computing with twists have been studied for a range of two dimensional codes. These codes include the surface code \cite{Bombin,Brown}, the colour code \cite{Kesselring,Scruby}, subsystem colour codes \cite{Bombin2} and the $\mathbb{Z}_3$ abelian quantum double model (which generalises the surface code to qutrits) \cite{Cong}. All of these schemes allow only Clifford gates by braiding, and require additional methods, such as magic state distillation \cite{MSD} or topological charge measurement \cite{Cong}, to realise universality. We note that genons, a particular type of twist defect, have also attracted study for their braiding properties across a range of models, and allow for universality by braiding in codes that are more exotic than topological stabiliser codes~\cite{BarkeshliGenon1,BarkeshliGenon2}. For more on the general theory of defects in two dimensional topological stabiliser codes, we refer the reader to Refs.~\cite{BarkeshliTheory1,BarkeshliTheory2}.

\subsection{Formalism for Encoding in Topological Defects}
\label{Sec:Encoding}
We now present a formal definition of topological defects and a general formalism for encoding quantum information in topological defects. 

To define defects, assume that we initially have a topological stabiliser code that is translationally invariant. A \textit{defect} is defined to be a $k$-dimensional region introduced to this code where this translational invariance is broken, with $0\leq k<D$. This can be viewed as a region of the code where the Hamiltonian terms are altered. Unlike the original stabilisers, these altered Hamiltonian terms are not required to be Pauli operators but they must be local. We refer to a defect at which topological excitations can condense as a \emph{topological defect}, as this condensation allows the defect to carry topological charge. We assume that only topological excitations condense at topological defects, to ensure that the defect scheme is fundamentally topological.

To specify how quantum information is encoded in topological defects, we consider a topological stabiliser code supported on a topologically trivial manifold without boundaries. Such a code has no ground state degeneracy (and so encodes no logical information) without defects.\footnote{This assumption is only made to allow us to focus attention on logical information encoded in defects without the complication of additional logical information. Once the scheme for encoding of logical information into these defects has been described, we may transfer this scheme into the bulk of a code on a more general manifold without affecting our conclusions.} Consider introducing a topological defect that allows for the condensation of some abelian group of topological excitations $G$. We can now study sets of excited states that differ from the vacuum only by the presence of topological excitations, $g \in G$, and which can be locally connected through condensation at the defect.  Specifically, we can identify a subspace of dimension $|G|$ of such states, including the vacuum. 

For clarity, from here on we assume that we are working in the special case where $G=\mathbb{Z}_2^k=\langle g_1, ..., g_k\rangle$ for some $k \geq 1$ so that we deal only with binary labelings of these states, but our examples and results naturally generalise to other abelian groups.  Our subspace can then be viewed as $k$ qubits, and (uniquely up to a change of basis) we may define the computational basis states of qubit $m$ distinguished by the parity of excitation $g_m$ at the defect.

However, such an encoding does not provide protection against errors, since all such states can then be interchanged by local errors in the region around the defect. To encode topologically protected quantum information in a ground state degeneracy, we require that logical states are all degenerate and can only be interchanged by operators that are not local to the defect. This can be expressed as the constraint that we require that, for logical states, the net charge of the whole code remains neutral (i.e.~the outcome of fusing all excitations present in the code, including those at topological defects, is the vacuum). Under this constraint, excitations that are created at the defect must be annihilated elsewhere in the code to maintain a logical state. This process by which topological excitation $g_m$ is created at the defect and then is annihilated elsewhere corresponds to the logical operator $\bar{X}_m$. 

In a code with no boundaries on a topologically trivial manifold, there are two possibilities for where this annihilation may take place -- in the vacuum, or at other defects. These two possibilities give rise to two types of encoding of logical information in defects, which we now consider in turn for the purpose of illustration.  We emphasise, however, that more general defect encoding schemes to which our results apply may involve arbitrarily many topological defects, and include logical qubits specified using combinations both types of encodings.

\subsubsection{Two Defect Encoding}
\label{Sec:TDE}
We consider first the case of a topological excitation $g_m$ that is created at one topological defect and annihilated at another. This gives rise to an encoding specified by two topological defects. The $|\bar{0}\rangle_m$ and $|\bar{1}\rangle_m$ states may be associated with even and odd parity of $g_m$ respectively on either of the two topological defects. This encoding is used for all the two-dimensional schemes reviewed in the previous subsection.

We note that, in dimension greater than two, $g_m$ may not be point-like.  Such higher dimensional topological excitations arise at the boundary of a restricted locality-preserving logical operator of more than one dimension.  Since such a boundary can be a single connected object, these excitations can thus be created individually from the vacuum. Thus, to ensure that the parity of $g_m$ remains the same between the pair of defects, we must thread an additional defect through the pair used for encoding that prevents the excitation being created from the vacuum. An example of this in three dimensions is shown in Fig.~\ref{Fig:DoubleDefect}.

\begin{figure}
\centering
\includegraphics{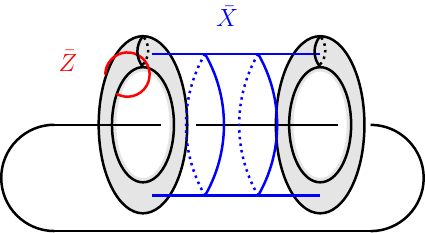}
\caption{An example of a logical qubit encoded in a pair of topological defects. The two tori are holes in the three dimensional surface code at which magnetic fluxes may condense. The additional loop threaded through is a puncture (which may be viewed like a narrow tube) at which magnetic fluxes cannot condense. The computational basis states of the qubit correspond to the parity of fluxes at either of the toroidal holes, which is well-defined since the threaded defect prevents these fluxes from being created from the vacuum. The $\bar{X}$ operator is a surface corresponding to propagating a flux between the holes. The logical $\bar{Z}$ is realised by braiding an electric charge around either hole. \label{Fig:DoubleDefect}}
\end{figure}

\subsubsection{Single Defect Encoding}

In more than two dimensions, it is also possible to encode a logical qubit in a single topological defect, provided the topological excitation used to define the logical qubit can be absorbed into the vacuum. Specifically, consider a defect allowing for the condensation of a topological excitation that is not point-like, for example a loop-like excitation. Such an excitation can be created at a single defect and then shrunk until it is annihilated in the vacuum. Thus, we may associate the computational basis states of a logical qubit with the presence or absence of this excitation on the single defect. An example of this type of encoding is shown in Fig.~\ref{Fig:SingleDefect}.

\begin{figure}
\centering
\includegraphics{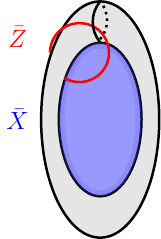}
\caption{An example of a logical qubit encoded in a single defect. The defect is a toroidal hole in a 3D toric code that can condense one dimensional $m$ excitations. The logical $\bar{X}$ operator is implemented by a process by which an $m$ is created at a defect and then shrinks to be annihilated in the vacuum. This is equivalent to applying a membrane of $X$ operators that meets the torus in a topologically non-trivial loop. The logical $\bar{Z}$ is implemented by braiding an $e$ excitation around the defect. This is equivalent to a loop of $Z$ operators around the torus. \label{Fig:SingleDefect}}
\end{figure}

\subsection{Fault-Tolerant Logical Operators for Encodings in Defects}
Logical qubits encoded in topological defects admit multiple classes of fault-tolerant logical operators. We here describe these classes and the relationships between them.

\subsubsection{Topological Locality-Preserving Logical Operators}
\label{TLPLOs}
{Topological locality-preserving logical operators (TLPLOs) are logical operators that can be implemented by the propagation of topological excitations. Specifically, a TLPLO is a logical operator that admits an implementation by which a topological excitation (or set of topological excitations) is created, is dragged along a topologically non-trivial path through the code, and then is annihilated to return to the code space. This creation and annihilation may take place at topological defects (e.g.~the $\bar{X}$ operators in Figs.~\ref{Holes} and \ref{Fig:DoubleDefect}). Alternatively, it may take place in the vacuum, with either the creation and annihilation of a pair of point-like excitations (e.g.~$\bar{Z}$ in Figs.~\ref{Holes}, \ref{Fig:DoubleDefect} and \ref{Fig:SingleDefect}) or of a single topological non-trivial loop or higher-dimensional excitation. We also allow for creation from the vacuum and annihilation at a defect or vice versa (e.g.~the $\bar{X}$ operator in Fig.~\ref{Fig:SingleDefect}). In each case, the corresponding TLPLO is a logical operator, since it preserves the ground space. 

TLPLOs are necessarily locality-preserving logical operators. This is because, by contrast to the propagation of more general topological objects, such as non-abelian anyons and defects, propagation of topological excitations in a topological stabiliser code is necessarily locality-preserving. This remains true even in the presence of defects. Indeed, for propagation of a topological excitation to be well-defined, we require that it must preserve the Hamiltonian on regions through which the propagation occurs. This implies that propagation must preserve the locality of Hamiltonian terms. In the bulk of the code (i.e.~away from topological defects), all Hamiltonian terms are local and no excitations can condense and so all local errors must have some set of local Hamiltonian terms (which may be non-Clifford) with which they do not commute. Hence, local errors must also remain local under the propagation of topological excitations. In the region around topological defects, we allow topological excitations to condense and so there can be operators local to this region that commute with all Hamiltonian terms. However, the only processes that require topological excitations to come close to topological defects are creation and annihilation, which are necessarily locality-preserving. Thus, TLPLOs are locality-preserving. Since the creation, propagation and annihilation of topological excitations is implemented by unitary operators, this implies that TLPLOs are indeed locality-preserving logical operators.

We note that the set of TLPLOs acting on any set of logical qubits encoded in topological defects always contains a non-trivial element. Specifically, the logical $\bar{X}$ acting on each logical qubit is a TLPLO, as described in Sec.~\ref{Sec:Encoding}.

\subsubsection{More General Locality-Preserving Logical Operators}
\label{LPLOs2}
Locality-preserving logical operators (as introduced in Sec.\ref{LPLO}) are a more general class of fault-tolerant logical operators than TLPLOs. In particular, we note that while all TLPLOs are locality-preserving logical operators, not all locality-preserving logical operators can be implemented by propagating topological excitations. For example, locality-preserving logical operators that have support on the whole of a code are not TLPLOs. Relevant examples of such operators include the single-qubit logical Clifford gates required to achieve the full Clifford group in schemes in the two-dimensional surface code \cite{Fowler2} and colour code \cite{Fowler3}, and the three-dimensional $\overline{\text{CCZ}}$ operator in the three-dimensional defect setup described in Sec.~\ref{Sec:Universal Scheme}.

We can study locality-preserving logical operators by considering their action on TLPLOs. In particular, locality-preserving logical operators map TLPLOs to TLPLOs. To see this, note that locality-preserving logical operators necessarily map a localised excitation, $a$, to another localised excitation, $b$ \cite{Webster}. Moreover, if $a$ is a topological excitation, then it must be propagated by local unitary operators (as discussed in Sec.~\ref{TEs}). Such local unitary operators are mapped to local unitary operators under the action of a locality-preserving logical operator. Therefore, $b$ is necessarily propagated by local unitary operators. Also, a path followed by $a$ through the code is mapped to a topologically equivalent path followed by $b$ under the action of a locality-preserving logical operator. Thus, if $a$ is free to move freely along paths in a space of at least two spatial dimensions, then the same must be true of $b$. Hence, if $a$ is a topological excitation then $b$ must also be a topological excitation. Thus, a TLPLO implementable by the creation, propagation and annihilation of a topological excitation $a$ is mapped to a TLPLO implementable by the corresponding creation, propagation and annihilation of a topological excitation $b$.

\subsubsection{Logical Operators by Braiding Defects}
\label{BLOs}
We define a notion of braiding topological defects analogous to that for braiding topological excitations. Specifically, braiding topological defects consists of any process by which the positions of defects in the code are altered smoothly. So that this may be considered a topological process, we assume that defects remain sufficiently large and well-separated that the distance of encoded qubits is preserved throughout. To ensure fault tolerance, we also require that correctable errors remain correctable after braiding has been performed. This can be ensured by moving defects sufficiently gradually that error correction cycles are separated by a time in which the defect moves a length constant in the encoding distance.

The most well-studied braiding process is \textit{code deformation}, wherein the Hamiltonian is adiabatically transformed in such a way as to gradually change the positions of defects \cite{BombinMD}. However, our analysis and results are independent of which particular method is used for braiding. As an aside, we note a recent proposal for a low time-overhead approach to braiding non-abelian anyons~\cite{Zhu,Zhu2}. If this approach could be adapted to the braiding of defects considered here, our analysis would still apply to operators implemented in this way. By contrast, note that processes that involve discontinuous deformations, such as lattice surgery \cite{Horsman}, are not considered to be braiding.

A braiding process that preserves the ground space of the code implements a logical operator, which we refer to as a \textit{braiding logical operator}. Since the braiding is done sufficiently slowly, the code is in the ground space of the Hamiltonian corresponding to the current defect configuration throughout, and so such logical operators correspond to processes for which the final Hamiltonian is the same as the initial Hamiltonian. This is equivalent to saying that the final defect configuration is indistinguishable from the original configuration. Examples of processes that can implement braiding logical operators are illustrated in Fig.~\ref{Fig:braid}. 

We note that, as evident in Fig.~\ref{fig:braid3}, such processes in models of more than two dimensions may be more complicated and varied than in two dimensional codes. Indeed, in general such processes can involve defects of any spatial dimension, and can involve more than two defects at a time \cite{Wang}.

\begin{figure}
\centering
\subfigure[One point-like defect braided around another point-like defect. \label{fig:braid1}]{\label{fig:a}\includegraphics[width=0.15\textwidth]{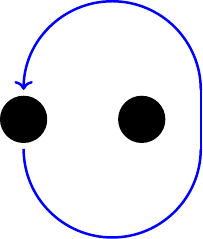}}
\subfigure[One point-like defect is exchanged with another (identical) point-like defect. \label{fig:braid2}]{\label{fig:b}\includegraphics[width=0.15\textwidth]{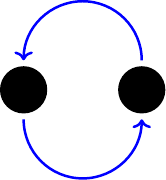}}
\subfigure[One loop-like defect is braided around another loop-like defect. \label{fig:braid3}]{\label{fig:b}\includegraphics[width=0.15\textwidth]{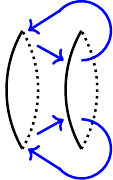}}
\caption{Examples of processes that can implement braiding logical operators. \label{Fig:braid}}
\end{figure}

We can study braiding logical operators by considering their action on TLPLOs. Specifically, as we now show, braiding logical operators map TLPLOs to TLPLOs. To see this, for a defect $\mathcal{D}$ we distinguish two cases -- where the support of the TLPLO intersects with $\mathcal{D}$, and where it does not:
\begin{enumerate}[(i)]
\item \textit{TLPLO does not intersect with $\mathcal{D}$:} If the TLPLO does not intersect with $\mathcal{D}$, then as $\mathcal{D}$ is braided the TLPLO will simply be deformed so that it continues not to intersect with $\mathcal{D}$. The effect of this is not necessarily trivial. For example, if $\mathcal{D}$ is moved to a new position, then the support of the TLPLO will be correspondingly transformed to enclose the defect at this new position (e.g.~as shown in Fig.~\ref{7a}). However, it is clear that it will remain a TLPLO.
\item \textit{TLPLO intersects with $\mathcal{D}$:} If the TLPLO intersects with $\mathcal{D}$, then the effect of braiding $\mathcal{D}$ is that the path followed by the topological excitation to implement the TLPLO is deformed so that it also follows the path along which $\mathcal{D}$ is braided (e.g.~as shown in Fig.~\ref{7b}). Thus, the TLPLO is mapped to a TLPLO implemented by propagating the topological excitation along this new path.
\end{enumerate}

We note that the argument for case (ii) assumes that the topological excitation implementing a TLPLO remains a topological excitation even when additional braiding around topological defects is appended. This assumption  is not trivial, since topological excitations that are braided around topological defects (such as twists) can be mapped to distinct excitations. However, we note that if a topological excitation $a$ were transformed to an object $b$ that were not a topological excitation under braiding around a topological defect, then this would allow for the condensation of the non-topological excitation $a^*b$ at the defect (where $a^*$ denotes the anti-excitation of $a$), analogously to the example of twists in Sec.~\ref{Sec:Twists}. This is inconsistent with the definition of topological defects presented in Sec.~\ref{Sec:Encoding}. Thus, a topological excitation that is braided around a topological defect remains a topological excitation, and so the transformed logical operator corresponds to a TLPLO.

We thus conclude that TLPLOs are indeed mapped to TLPLOs under the action of braiding logical operators, as for locality-preserving logical operators. In the next section we show that this has consequences for the range of braiding logical operators implementable in a code.

\begin{figure}
\centering
\subfigure[Exchange of left and right of defects maps a TLPLO (red) enclosing the left defect to one enclosing the right defect. \label{7a}]{
\includegraphics{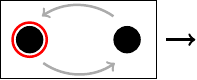}
\includegraphics{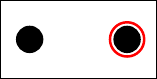}
}
\subfigure[Action of braiding of the top-right defect around the bottom right on TLPLO (blue). The TLPLO initially corresponds to transferring a topological excitation between the top defects. After the braid, it corresponds to this transferral along with the excitation being braided around the bottom-right defect. \label{7b}]{
\includegraphics{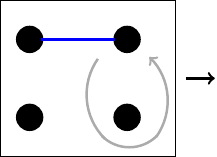}
\includegraphics{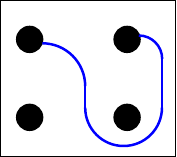}
}
\caption{Examples of the action of braiding logical operators on TLPLOs.}
\end{figure}

\section{Limitations of Braiding Defects}

\label{Sec:Limitations}

Braiding topological defects in topological stabilizer codes can be used to perform fault-tolerant logic gates.  However, the logic gates that can be performed through braiding are highly constrained, as we prove in this section. We begin by proving a very general result:  that the set of logical operators given by any combination of braiding and locality-preserving logical operators cannot be universal.  That is, the limitations on performing universal fault-tolerance given by Eastin-Knill~\cite{Eastin} and Bravyi-K{\"o}nig~\cite{Bravyi} cannot be overcome, even when adding the ability to braid topological defects.  We briefly consider how these results may be naturally generalised to the broader class of abelian quantum double models. Finally, we note how additional natural assumptions regarding the encoding can yield an even stronger constraint, that all braiding logical operators are contained in the Clifford group.

\subsection{Universality No-Go Result}
\label{Sec:NoGo1}

Here we prove our main result, that a universal gate set cannot be achieved by using only locality-preserving logical operators and braiding logical operators in a topological stabiliser code of any spatial dimension. This proof builds on the result of Bravyi and K{\"o}nig \cite{Bravyi} that the group of locality-preserving logical operators admitted by a topological stabiliser code cannot be universal. Specifically, we show that the addition of logical operators implementable by braiding topological defects is insufficient to allow for universality.

We first prove two lemmas, and then the theorem itself.

\begin{lem}
The group of topological locality preserving logical operators (TLPLOs) acting on any finite set of logical qubits encoded in defects is finite.\label{Lem2}
\end{lem}
\begin{proof}
We show first that the set of topological excitations in a topological stabiliser code is finite. Consider a topological stabiliser code with periodic boundary conditions. On such a code, each topological excitation has a corresponding TLPLO implemented by propagating the excitation around the code. Thus, the set of topological excitations cannot be larger than the set of TLPLOs. However, the set of TLPLOs is contained in the set of locality-preserving logical operators, which must be finite \cite{Bravyi}. Thus, the set of topological excitations is finite.

The set of topological defects used to encode a finite set of logical qubits is necessarily finite. For a finite set of topological defects, the set of topologically non-trivial processes a topological excitation can undergo is finite. The set of topological excitations is independent of the presence of topological defects, and so must remain finite even in the presence of defects.  Hence, the set of TLPLOs is finite.
\end{proof}

\begin{lem}
Let $\bar{U}$ be a logical operator implementable by a product of braiding and locality-preserving logical operators, and $\bar{A}$ be topological locality-preserving logical operator (TLPLO). Then, $\bar{U}\bar{A}\bar{U}^\dag$ is a TLPLO. \label{Lem1}
\end{lem}
\begin{proof}
As discussed in Sec.~\ref{BLOs}, braiding logical operators map TLPLOs to TLPLOs, and so if $\bar{A}$ is a TLPLO and $\bar{B}$ is a braiding logical operator then $\bar{B}\bar{A}\bar{B}^\dag$ is a TLPLO. Also, as discussed in Sec.~\ref{LPLOs2}, locality-preserving logical operators map TLPLOs to TLPLOs, and so if $\bar{A}$ is a TLPLO and $\bar{L}$ is a locality-preserving logical operator, then $\bar{L}\bar{A}\bar{L}^\dag$ is a TLPLO. If $\bar{U}$ is a product of braiding and locality-preserving logical operators, it acts on TLPLO $\bar{A}$ by a sequence of conjugations by braiding or locality-preserving logical operators. Thus, $\bar{U}\bar{A}\bar{U}^\dag$ must be a TLPLO.
\end{proof}

\begin{theorem}
The set of logical operators implementable by any product of locality-preserving logical operators and braiding logical operators in a topological stabiliser code cannot be universal. \label{Theorem1}
\end{theorem}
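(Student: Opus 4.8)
The plan is to adapt the Eastin--Knill argument to this more general setting, using the two lemmas to reduce the problem to a statement about the structure of the group generated by all such logical operators. Let $\mathcal{L}$ denote the group of locality-preserving logical operators acting on the encoded qubits, let $\mathcal{B}$ denote the group generated by braiding logical operators, and let $G = \langle \mathcal{L}, \mathcal{B} \rangle$ be the group of all logical operators implementable by combining the two. By Lemma~\ref{Lem2}, $\mathcal{L}$ is finite; by Lemma~\ref{Lem1}, every braiding operator normalises $\mathcal{L}$, and since $\mathcal{L}$ trivially normalises itself, $\mathcal{L}$ is a finite normal subgroup of $G$. Crucially, $\mathcal{L}$ is also nontrivial: as noted in Sec.~\ref{Sec:Encoding}, each encoded qubit carries a locality-preserving logical $\bar{X}$ operator, which acts nontrivially on the logical space.

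First I would pass to logical actions modulo global phase, so that $G$ is regarded as a subgroup of the projective unitary group $PU(2^n)$ on the $n$ encoded qubits, and interpret universality as density of $G$ in $PU(2^n)$. Suppose, for contradiction, that $G$ is universal, i.e.\ dense. Taking closures, the topological closure $\overline{G}$ is a compact Lie group (by Cartan's closed-subgroup theorem) containing all of $PU(2^n)$, so $\overline{G} = PU(2^n)$; in particular its identity component is all of $PU(2^n)$, which is connected. Since $\mathcal{L}$ is finite it is already closed, and it remains a normal subgroup of $\overline{G}$.

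The key step is then to observe that a connected group cannot act nontrivially by conjugation on a finite normal subgroup: conjugation furnishes a continuous homomorphism from the connected group into the discrete group $\mathrm{Aut}(\mathcal{L})$, which must therefore be trivial. Hence the identity component of $\overline{G}$---all of $PU(2^n)$---centralises $\mathcal{L}$. But $PU(2^n)$ acts irreducibly on $\mathbb{C}^{2^n}$, so by Schur's lemma its centraliser is trivial; thus every element of $\mathcal{L}$ is the logical identity, contradicting the existence of the nontrivial $\bar{X}$ operators. Therefore $G$ cannot be universal.

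The main obstacle is ensuring this reduction is carried out at the level of logical action and robustly with respect to global phases: one must confirm that $\mathcal{L}$ descends to a genuinely finite, nontrivial normal subgroup of $PU(2^n)$, that density is the correct reading of ``universal,'' and that the closure step legitimately produces a Lie group to which the connectedness and Schur arguments apply. Notably, this argument nowhere assumes that the braiding operators are Clifford, which is precisely what allows it to generalise the more restrictive result of the companion paper.
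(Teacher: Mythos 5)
Your proposal is correct and takes essentially the same route as the paper: both use Lemma~\ref{Lem1} and Lemma~\ref{Lem2} to exhibit the locality-preserving logical operators as a finite, non-empty set that is invariant under conjugation by everything in the generated group, and then conclude this is incompatible with density. The only difference is that where the paper asserts that final incompatibility in one line, you spell it out rigorously (closure to a compact connected group, triviality of continuous maps into the discrete automorphism group, and Schur's lemma), which is a sharpening of the same argument rather than a different one.
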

\begin{proof}
By Lemma \ref{Lem2}, the set of TLPLOs acting on any finite set of logical qubits encoded in defects is finite and (as discussed in Sec.~\ref{TLPLOs}) this set always contains a non-trivial element since $\bar{X}_i$ acting on each logical qubit is a TLPLO. By Lemma \ref{Lem1}, products of braiding and locality-preserving logical operators permute this finite set containing a non-trivial element. Thus, the set of products of locality-preserving and braiding logical operators cannot be dense in the set of logical operators.  This means that it is not possible to construct such a set of operators that can be used to approximate any logical operator arbitrarily precisely. So, by definition, the set of combinations of locality-preserving and braiding logical operators cannot be universal.
\end{proof}

Theorem \ref{Theorem1} shows that including logical operators implemented by braiding topological defects is not sufficient to lift the set of locality-preserving logical operators to universality in a topological defect setup in any topological stabiliser code. The sets of logical operators considered in proving this result are represented in Fig.~\ref{Sets}.

\begin{figure}
\includegraphics{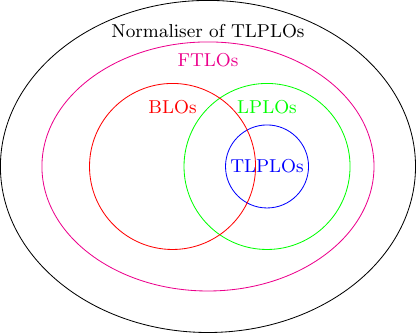}
\caption{A Venn diagram representing relevant groups of logical operators. The group of fault-tolerant logical operators (FTLOs) we consider is defined to be the group of operators implementable by products of locality-preserving logical operators (LPLOs) and braiding logical operators (BLOs). Lemma \ref{Lem1} shows that this group is contained in the normaliser of the group of TLPLOs. Lemma \ref{Lem2} shows that the set of TLPLOs is finite. Combined with the fact that the group of TLPLOs is non-trivial, this implies that the normaliser of the group of TLPLOs is not universal, and hence that the set of FTLOs cannot be universal -- the result of Theorem \ref{Theorem1}. Similarly, for the standard encoding introduced in Sec.~\ref{Sec:NoGo2}, Lemma \ref{Lem3} shows that the group of TLPLOs is the logical Pauli group, which implies that the set of FTLOs is contained in the normaliser of the Pauli group (the Clifford group) -- the result of Theorem \ref{Theorem3}.} \label{Sets}
\end{figure}

The above proof of Theorem \ref{Theorem1} is expressed in terms of the action of braiding and locality-preserving logical operators on TLPLOs, since this is the most direct way to show the result. For clarity, we note that this can also be understood in terms of the action of braiding and locality-preserving logical operators on logical states of encoded qubits. To illustrate this with the example of a single logical qubit, consider the state $|\bar{+}\rangle$ stabilised by TLPLO $\bar{X}$. Lemma \ref{Lem1} implies that this state can only be mapped to another state stabilised by a TLPLO. Since the set of TLPLOs is finite, this implies that there is only a finite set of states to which $|\bar{+}\rangle$ can be mapped and so arbitrary rotations of the Bloch sphere cannot be approximated arbitrarily well by products of locality-preserving and braiding logical operators. This translation of the proof into the language of states can similarly be performed for the general case of multiple logical qubits.

\subsection{Generalisation to Abelian Quantum Double Models}
In this paper we primarily consider topological stabiliser codes. However, we may also consider whether Theorem 1 may be generalised to include other classes of topological quantum error correcting codes. For example, Escobar-Vel{\' a}squez \textit{et al.}~have studied braiding holes in two dimensional Dijkgraaf-Witten theories, which include topological stabiliser codes as well as more general abelian and non-abelian quantum double models \cite{Escobar}. They showed that the set of gates implementable by braiding holes in such codes cannot be universal. In light of our results, it is natural to ask whether this result could be generalised to more general defects and to models of higher spatial dimensions.

We here argue that our results can at least be generalised to all topological defects implemented in codes in the class of abelian quantum double models of all spatial dimensions. We confine our attention to this class since the topological excitation structure and locality-preserving logical operators they admit have been studied and classified \cite{Webster} and shown to be entirely analogous to that of topological stabiliser codes. This means that the theory of Sec.~\ref{Sec:Theory} and results used to prove Theorem \ref{Theorem1} hold entirely and straightforwardly analogously for abelian quantum double models. The class of codes is also interesting, since the power of braiding its defects has been explored in previous work. In particular, the technique of topological charge measurement has been shown to provide universality in the double of $\mathbb{Z}_3$ when used to supplement braiding logical operators \cite{Cong}.

Abelian quantum double models are defined by specifying some abelian group, $A$ \cite{Kitaev2}. Generalised logical Pauli operators are defined such that $\bar{X}$ and $\bar{Z}$ operators each generate groups isomorphic to $A$. The generalised Clifford group for a given $A$ can be defined to be the normaliser of the generalised Pauli group. With these definitions, we may then state the following result. The proof of this result is entirely analogous to that presented for Theorem \ref{Theorem1}.

\begin{theorem}
The set of logical operators implementable by any combination of locality-preserving logical operators and braiding logical operators in an abelian quantum double model cannot be universal.
\end{theorem}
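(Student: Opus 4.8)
The plan is to reprove the statement by following the same three-part structure used for Theorem~\ref{Theorem1}: re-establishing the analogues of Lemma~\ref{Lem1} and Lemma~\ref{Lem2} in the setting of abelian quantum double models, and then invoking the identical invariant-set argument. The crucial enabler is the classification of \cite{Webster}, which shows that the localised excitations and locality-preserving logical operators admitted by abelian quantum double models are entirely analogous to those of topological stabiliser codes: excitations still arise as boundaries of locality-preserving logical operators, each logical qubit encoded in defects still carries a locality-preserving generalised $\bar{X}$ obtained by moving its condensing excitation, and the generalised Clifford group is the normaliser of the generalised Pauli group generated by the $\bar{X}$ and $\bar{Z}$ operators. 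Granting this dictionary, the two lemmas transfer essentially verbatim, and the only substantive work is to confirm that each ingredient of the original proof survives the translation.

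First I would establish the analogue of Lemma~\ref{Lem1}: that $\bar{B}\bar{U}\bar{B}^{\dagger}$ is locality-preserving whenever $\bar{U}$ is locality-preserving and $\bar{B}$ is implementable by braiding. This argument is purely geometric. It decomposes the braid into translations, rotations and stretchings of the support of $\bar{U}$ together with the action of any defects that $\bar{U}$ crosses, and observes that each of these operations grows the support of a local operator by at most a constant, since defect interactions are required to be local. None of these steps refers to the particular abelian group $G$ defining the model, so the proof carries over without modification.

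Next I would establish the analogue of Lemma~\ref{Lem2}: that any finite collection of logical qubits encoded in defects admits a non-empty but finite set of locality-preserving logical operators. Non-emptiness is immediate, since each encoded qubit carries a locality-preserving generalised $\bar{X}$. For finiteness, the argument reduces, exactly as before, to the claim that the group of localised excitations of an abelian quantum double model is finite: on a code with periodic boundary conditions each topologically non-trivial traversal of an excitation yields a distinct locality-preserving logical operator, so finiteness of the locality-preserving logical group forces finiteness of the excitation group, whence only finitely many topologically distinct excitation paths exist through any finite arrangement of defects. The main obstacle is therefore the generalised finiteness result of Bravyi--K\"{o}nig type: one must confirm that a \emph{defect-free} abelian quantum double model admits only finitely many locality-preserving logical operators, playing the role that \cite{Bravyi} plays for stabiliser codes. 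I would take this directly from the classification of \cite{Webster}, which is precisely what secures the analogy and hence renders the remainder of the proof routine.

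Finally, with both lemmas in hand, I would close by the same invariance argument. The set of locality-preserving logical operators is a finite, non-empty subset of the logical operators that is invariant under conjugation by the group generated by braiding and locality-preserving operators. Choosing any non-scalar element of this set, such as a generalised $\bar{X}$, its conjugation orbit under a universal (hence dense) group would be dense in its full conjugacy class, which is an infinite, positive-dimensional manifold---contradicting the finiteness of the invariant set. Hence the generated group cannot be dense, and so cannot be universal.
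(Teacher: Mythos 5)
Your proposal is correct and follows essentially the same route as the paper: the paper's own proof of this theorem consists of declaring it ``entirely analogous'' to Theorem~\ref{Theorem1}, justified by the classification of \cite{Webster} showing that excitations and locality-preserving logical operators in abelian quantum double models mirror those of topological stabiliser codes, which is precisely the dictionary you invoke to transfer Lemma~\ref{Lem1}, Lemma~\ref{Lem2} and the finite-invariant-set argument. Your only addition is making the final density-contradiction step explicit via the conjugation orbit of a non-scalar element, a minor sharpening of what the paper leaves as an assertion.
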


\subsection{No-Go Result for Standard Encoding}
\label{Sec:NoGo2}
Theorem \ref{Theorem1} is a very general result.  It applies to all of the varied encodings considered in Sec.~\ref{Sec:Encoding}, including some exotic encodings that are very different from those for which quantum computing with defects has been studied in detail.  We now consider whether a stronger result can be found for the more standard scenario for quantum computing with defects. Here, we construct an encoding, which we call the \emph{standard encoding}, that naturally generalises the well-studied two-dimensional defect encoding schemes while still allowing for a range of exotic phenomena that emerge only in higher-dimensional codes.  For the standard encoding, we prove an even more restrictive no-go theorem than Theorem \ref{Theorem1}. Specifically, we prove that logical operators implementable by locality-preserving logical operators and braiding logical operators on qubits encoded using the standard encoding are necessarily contained in the Clifford group.

The standard encoding is defined by imposing the following two assumptions. First, all logical qubits are encoded using a \emph{two defect encoding}, as described in Sec.~\ref{Sec:TDE}.  We impose this assumption because such an encoding is more naturally suited to braiding than a single defect encoding, as it allows for the two defects used in the encoding to be braided around each other to implement braiding logical operators on the encoded qubit. Second, we assume that, for each topological excitation $g_m$ that is used to define a logical qubit (as described in Sec.~~\ref{Sec:TDE}), there exists a topological excitation $b_m$ that gives a phase of $-1$ when braided with $g_m$. This assumption is imposed so that each logical qubit has a logical $\bar{Z}$ operator that is a TLPLO, which is implemented by braiding the topological excitation $b_m$ with either of the encoding defects.  This ensures that the property that logical Pauli operators are TLPLOs, which is shared by all standard approaches to encoding information in topological stabiliser codes (with \cite{Raussendorf1,Raussendorf2,BombinHole,Fowler1,Fowler2,Fowler3,Hastings,Brown,Bombin,Bombin2,Kesselring,Scruby} or without \cite{Webster,Kitaev2,BravyiKitaev,ColourCode,BombinNoBraid,Bombin4,Terhal} defects) in all spatial dimensions, is preserved in the standard encoding into defects. For examples of the standard encoding, we refer the reader to Sec.~\ref{Sec:Schemes}, in which all encodings satisfy these assumptions and which illustrate the range of phenomena possible within the scope of the standard encoding in $D>2$ dimensions.

With the standard encoding, Lemma~\ref{Lem2} can be significantly strengthened. Specifically, the finite group of TLPLOs implied by Lemma~\ref{Lem2} is the logical Pauli group, as shown by the following lemma.
\begin{lem}
All TLPLOs acting on a set of logical qubits encoded using the standard encoding are logical Pauli operators. \label{Lem3}
\end{lem}
\begin{proof}
Let $\bar{U}$ be a TLPLO in a $D$-dimensional code. We show first that $\bar{U}$ must be contained in the Clifford hierarchy (indeed in the $D$th level of the hierarchy) and then proceed to show it must in fact be in the Pauli group by induction on the Clifford hierarchy.

If $\bar{U}$ is outside the $D$th level of the Clifford hierarchy, then there exists a set of $D$ Pauli TLPLOs, $\{\bar{P}_1,\ldots \bar{P}_{D}\}$, whose sequential group commutator with $\bar{U}$ is non-trivial \cite{Pastawski,YoshidaC}. This implies that the corresponding defect-free code with periodic boundary conditions has a corresponding set of $D+1$ TLPLOs implemented by propagating the same topological excitations as for $\bar{U}$ and $\{\bar{P}_1,\ldots \bar{P}_{D}\}$ around a topologically non-trivial path whose sequential group commutator is also non-trivial. This implies that the defect-free code has a locality-preserving logical operator outside the $D$th level of the Clifford hierarchy \cite{Pastawski,YoshidaC}, which contradicts Ref.~\cite{Bravyi}.

We now prove the result by induction on the Clifford hierarchy. Specifically, assume all TLPLOs in the $k$th level of the Clifford hierarchy are logical Pauli operators. Then, if $\bar{U}$ is in the $(k+1)$th level of the Clifford hierarchy, its commutator with all single qubit logical Pauli operators must be logical Pauli operators, and so $\bar{U}$ must be in the Clifford group. We thus need only to prove the case where $\bar{U}$ is a Clifford TLPLO.

In this case, for an arbitrary logical qubit, $j$, encoded in a pair of topological defects, $\mathcal{D}$ and $\mathcal{D}'$, there exist logical Pauli operators $\bar{P}= \alpha\prod_i \bar{X}_i^{a_i}\bar{Z}_i^{b_i}$ and $\bar{Q}=\beta\prod_i \bar{X}_i^{c_i}\bar{Z}_i^{d_i}$ (for $\alpha,\beta \in \mathbb{C}$ and $a_i,b_i,c_i,d_i \in \mathbb{Z}_2$) such that $P=\bar{X}_j\bar{U}\bar{X}_j\bar{U}^\dag$ and $Q=\bar{Z}_j\bar{U}\bar{Z}_j\bar{U}$. Since the support of the group commutator of two operators must be contained in their intersection, this implies that $\text{supp}(\bar{P}) \subseteq \text{supp}(\bar{X}_j) \cap \text{supp}(\bar{U})$ and $\text{supp}(\bar{Q}) \subseteq \text{supp}(\bar{Z}_j) \cap \text{supp}(\bar{U})$

However, by construction, $\text{supp}(\bar{X}_j)$ does not enclose any defect in a topologically non-trivial manifold and $\text{supp}(\bar{Z}_j)$ does not connect any defect pairs, so $b_i=c_i=0$ for all $i$. Also, $\text{supp}(\bar{X}_j)$ does not connect any pair of defects except for $\mathcal{D}$ and $\mathcal{D}'$ and $\text{supp}(\bar{Z})$ also does not enclose any pair of defects except for $\mathcal{D}$ or $\mathcal{D}'$, so $a_i=d_i=0$ for all $i\neq j$. Additionally, since $\bar{U}$ is a TLPLO, it corresponds to the path of a topological excitation that necessarily can be freely moved in some direction perpendicular to that in which $\mathcal{D}$ and $\mathcal{D}'$ are separated and to some direction in which $\mathcal{D}$ is localised. Thus, for any particular implementations of $\bar{X}_j$ and $\bar{Z}_j$ we can deform $\bar{U}$ such that $\text{supp}(\bar{P}) \subseteq\text{supp}(\bar{X}_j) \cap \text{supp}(\bar{U})$ does not connect defects $\mathcal{D}$ and $\mathcal{D}'$ and $ \text{supp}(\bar{Q}) \subseteq\text{supp}(\bar{Z}_j) \cap \text{supp}(\bar{U})$ does not enclose $\mathcal{D}$ or $\mathcal{D}'$. Hence, $a_j=d_j=0$.

Thus, $\bar{P}=\alpha\bar{I}$ and $\bar{Q}=\beta\bar{I}$ and so the commutators of $\bar{U}$ with all single qubit logical operators is trivial up to a phase. Thus, $\bar{U}$ is a logical Pauli operator.
\end{proof}

Applying the approach used to prove Theorem \ref{Theorem1} (and summarised in Fig.~\ref{Sets}) to the standard encoding with this stronger lemma, we find that all combinations of locality-preserving and braiding logical operators are in the Clifford group.
\begin{theorem}
The set of logical operators implementable by any product of locality-preserving and braiding logical operator on logical qubits encoded in defects using the standard encoding is contained in the Clifford group. \label{Theorem3}
\end{theorem}
\begin{proof}
By Lemma \ref{Lem1}, any product of braiding and locality-preserving logical operators, $\bar{A}$, maps TLPLOs to TLPLOs under conjugation. By Lemma \ref{Lem3}, the group of TLPLOs is contained in the logical Pauli group and, as all logical Pauli operators are TLPLOs, this implies that the group of TLPLOs is equal to the logical Pauli group. Hence, $\bar{A}$ maps logical Pauli operators to logical Pauli operators under conjugation, and so is contained in the Clifford group.
\end{proof}

Theorem \ref{Theorem3} shows that the set of operators implementable by products of braiding and locality-preserving logical operators on qubits using the standard encoding is highly restricted. In particular, schemes based on braiding defects in two dimensional topological stabiliser codes to implement the Clifford group (such as with twists in the surface code \cite{Bombin,Brown}) have long been known, and Theorem \ref{Theorem3} shows that such schemes cannot be improved upon with the standard encoding.

While this conclusion is consistent with previous results in two dimensional topological stabiliser codes \cite{Bravyi}, it is surprisingly restrictive in higher dimensions. Specifically, we note that for any spatial dimension $D\geq 2$, there exist $D$-dimensional topological stabiliser codes (without defects) that admit locality-preserving logical operators that are strictly in the $D$th level of the Clifford hierarchy \cite{Kubica,Webster}. One might expect that an analogous result would hold for braiding defects -- that a corresponding defect scheme could be constructed in a $D$-dimensional code that allowed a braiding logical operator in the $D$th level of the Clifford hierarchy. This expectation is furthered by the greater range of defects and braiding phenomena possible in higher dimensional codes~\cite{Mesaros,Else,Wang,Yoshida}. However, our result shows that, with the standard encoding, all braiding logical operators are confined to the second level of the Clifford hierarchy, and hence that this is not possible.

The restrictiveness of Theorem \ref{Theorem3} can be understood to reflect a tradeoff between locality-preserving logical operators and braiding logical operators. Specifically, because braiding logical operators permute the group of TLPLOs (as shown in Lemma \ref{Lem2}), the structure of this group constrains the power of braiding in a model. In particular, with the standard encoding the group of TLPLOs is the logical Pauli group (as shown in Lemma \ref{Lem3}), and it is this structure that restricts braiding logical operators to the Clifford group. Interestingly, this tradeoff is reminiscent of that identified for gates implemented by braiding anyons in two-dimensional topological quantum field theories in Ref.~\cite{Beverland}. Indeed, that work observed that models that admit the Pauli group as locality-preserving logical operators have operators implemented by braiding anyons contained in the Clifford group, analogous to our result. By contrast, the more general range of encodings accounted for by Theorem \ref{Theorem1} allows for logical $\bar{Z}$ operators that are not TLPLOs and thus allows for braiding logical operators that are less constrained. However, since even these highly general schemes still must have logical $\bar{X}$ operators as TLPLOs, braiding logical operators still cannot be universal, as shown by Theorem \ref{Theorem1}. Our results thus show how the essential structure of topological stabiliser codes constrains braiding logical operators.

\section{Schemes for Implementing the Clifford Group by Braiding}
\label{Sec:Schemes}
In this section, we explore examples of schemes to implement the full Clifford group fault-tolerantly in topological stabilizer codes.  In particular, we generalise the scheme of Ref.~\cite{Brown} for implementing the Clifford group by braiding twists in the two-dimensional surface code to higher dimensional codes. In doing so, we demonstrate a very general result: that any code with twists that can condense a \textit{generalised fermion} can be used to implement the full Clifford group by braiding.  By a generalised fermion, we mean an eigenstate excitation of any spatial dimension with the property that exchanging a pair of such excitations gives a phase of $-1$. As the Clifford group can be used to achieve universality when supplemented by magic state distillation \cite{MSD}, this generalisation gives a framework for universal quantum computation in a large class of topological stabilizer codes.

In this section, we first review braiding twists in the two dimensional surface code, following Ref.~\cite{Brown}, as the motivating example for this result.  We then provide and describe three examples of twist setups using domain walls in codes of more than two dimensions that have twists that condense generalised fermions. Specifically, we first generalise the example of the two dimensional surface code to \textit{self-dual surface codes}. These are surface codes for which the logical Pauli $\bar{X}$ and $\bar{Z}$ operators are of the same dimension or, equivalently, where electric and magnetic excitations are of the same dimension. This includes the two-dimensional surface code, as well as instances of all even spatial dimensions. While these generalised schemes can only be realised in more than three dimensions, they are worthy of consideration because they offer the potential for self-correction and display braiding phenomena that do not appear in braiding in two dimensional codes.  As our second example, we explore braiding in the three dimensional Levin-Wen fermion model \cite{Levin,Haah} (a three-dimensional topological stabiliser code with point-like fermionic excitations). This is an interesting example for several reasons, including that it offers an example of a three dimensional code that allows for the Clifford group to be implemented fault-tolerantly. Finally, we investigate the checkerboard model. This stabilizer code is an example of a three dimensional foliated fracton model, and so illustrates how our results extend to codes beyond those which admit descriptions as topological quantum field theories. This scheme is based on using planons to encode logical qubits in one dimensional twists. It offers promise as a potential example which could be used to guide future work on performing fault-tolerant gates in fracton models.

In each of the three examples, we show explicitly how the single qubit Clifford group may be implemented. These results can be extended to the Clifford group on any number of encoded qubits by introducing ancilla holes, analogous to the construction in Ref.~\cite{Brown}; however we omit these details for the sake of brevity. We note that the schemes we consider all use the standard encoding and so we have already seen in Sec.~\ref{Sec:NoGo2} that they cannot allow for non-Clifford logical operators.

\subsection{Review:  Two Dimensional Surface Code}
\label{2DSCTwist}

The two-dimensional surface code admits a domain wall that interchanges $e$ and $m$ type excitations, corresponding to the locality-preserving logical Hadamard operator \cite{Bombin,Webster}. This wall allows a composite $em$ type excitation to condense at a twist at one of its endpoints. A pair of such walls with twists at their ends can allow for a single logical qubit to be encoded. Specifically, we encode the states $|\bar{0}\rangle$ and $|\bar{1}\rangle$ as corresponding to an even and odd parity of $em$ excitations in one of the domain walls. The logical Pauli $\bar{X}$ operator is then realised by a process that allows an $em$ excitation to be transferred between the two walls. The logical Pauli $\bar{Z}$ operator is realised by a process that results in a phase of $-1$ if the parity of $em$ excitations is odd. This can correspond to enclosing either of the domain walls in a loop traced out by an $e$ or $m$ excitation. An example of each of these Pauli logical operators is shown in Fig.~\ref{DWEncoding}. We note that this encoding in twists is different from conventional surface code encodings in boundaries or holes, which do not allow for the full Clifford group by locality-preserving logical operators or braiding.

\begin{figure}
\centering
\includegraphics{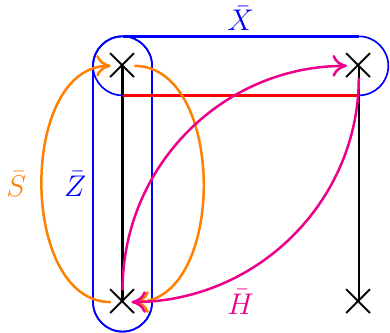}
\caption{The logical operators acting on the logical qubit encoded in two dimensional surface code twists. The Pauli logical operators are locality-preserving logical operators corresponding to paths of excitations of the code. Blue lines represent the path of an $m$ excitation and red lines represent the path of an $e$ excitation. The $\bar{Z}$ operator distinguishes the parity of $em$ excitations in the left domain wall and $\bar{X}$ transfers an $em$ excitation between the domain walls. The Clifford logical operators (orange and magenta) are implemented by braiding, specifically by exchanging twists.\label{DWEncoding}}
\end{figure}

We may then implement logical Clifford operators by interchanging twists to interchange logical Pauli operators \cite{Brown}. Specifically, the $\bar{X}$ and $\bar{Z}$ logical operators may be interchanged by a braiding process with the effect of swapping a pair of twists diagonally (for example the top-right and bottom-left in Fig.~\ref{DWEncoding}). This braid implements the logical Hadamard operator, $\bar{H}$. Similarly, swapping a pair of twists vertically (for example the two left twists) interchanges the logical $\bar{X}$ and $\bar{Y}$ operators (up to a phase of $-1$) and so realises the logical Clifford phase gate, $\bar{S}$. The $\bar{S}$ gate can also be understood by noting that it introduces a phase of $\pm i$ if and only if the parity $em$ on the domain wall is odd, since the excitation is a fermion being rotated by $\pi$. These logical operators are shown in Fig.~\ref{DWEncoding}. Together these logical operators generate the full single qubit Clifford group.

\subsection{Self-Dual Surface Codes}
We now present our first new example of a setup that allows for logical operators by braiding in more than two dimensions. It is a generalisation of the two dimensional surface code scheme we have just presented. Specifically, we show how a surface code with equal dimensions of its logical $\bar{X}$ and $\bar{Z}$ operators, and suitably defined twist-like topological defects, can admit the full Clifford group by braiding. 

We begin by focussing on the example of the four dimensional surface code (the simplest case beyond two dimensions). The four dimensional surface code has one dimensional loop-like $e$ and $m$ type excitations, and admits a three dimensional domain wall that interchanges these excitations \cite{Webster,Roberts4DSC}.  To generalise the setup used for the two dimensional surface code, we construct two pairs of concentric (two dimensional) tori as higher-dimensional twists, with the three dimensional volume between each pair as domain walls. A three dimensional cross-section of this is sketched in Fig.~\ref{4DSC}. Such twists allow for loop-like $em$ excitations to condense, which are generalised fermions. We note that to ensure that these $em$ excitations cannot condense from the vacuum away from these twists, we require an additional defect to be threaded through the twist setup, as shown in Fig.~\ref{4DSC}. This is not required in the two-dimensional case, since its point-like $em$ excitations cannot condense individually from the vacuum. However, such threaded defects are generally required when higher-dimensional excitations are used to encode, since such excitations can condense individually from the vacuum as closed loops (or more generally, hypersurfaces).

A logical qubit may be encoded in the parity of $em$ loops in one of these domain walls. The logical $\bar{X}$ operator is implemented by a process where a pair of $m$ loops is condensed from the vacuum, one of these loops passes through the pair of domain walls to change to $e$ and then back to $m$, then grows to be larger than the outer twists before returning back to annihilate with the other condensed $m$. Note that this process does indeed have the effect of passing an $em$ between the walls. It also creates a torus that encloses the two outer twists, analogously to how the two dimensional setup had a logical $\bar{X}$ operator that encloses the top two twists, as shown in Fig.~\ref{DWEncoding}. The logical $\bar{Z}$ is also a torus. It is implemented by a process by which a pair of $m$ loops condense passing through the hole in the left domain wall but not the right one. One of these $m$ loops then rotates through the dimension perpendicular to the three dimensional subspace in which $\bar{X}$ is embedded. This $\bar{Z}$ operator then encloses the two left twists and so is analogous to the $\bar{Z}$ operator in the two dimensional setup which also enclosed a pair of twists from the same domain wall. A cross-section with these Pauli logical operators are illustrated in Fig.~\ref{4DSC}.

\begin{figure}
\centering
\includegraphics{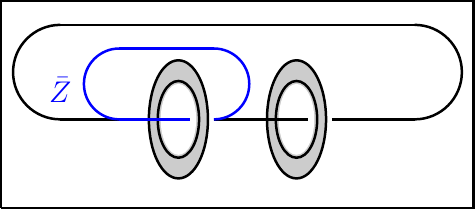}
\includegraphics{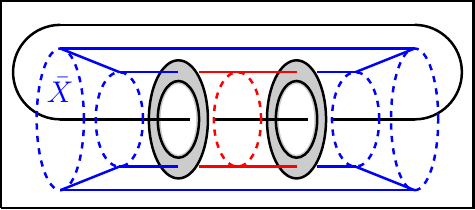}
\includegraphics{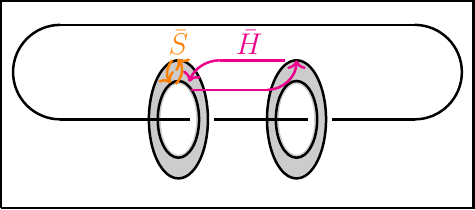}
\caption{A three dimensional cross-section of the twist setup in the four dimensional surface code. The fourth dimension can be considered to be parameterised by an angle, in which all pictured objects except for the $\bar{X}$ operator are extended. Twists are tori, shown here as black circles. Domain walls are the volume between the pair of tori (shown here in grey).
 The logical $\bar{Z}$ and $\bar{X}$ operators are both tori, as shown in the top two pictures. Blue lines indicate paths of loop-like $m$ excitations and red lines indicate paths of loop-like $e$ excitations.The bottom picture shows the Clifford logical operators that can be implemented by braiding, with magenta and orange arrows indicating the movement of defects to implement $\bar{H}$ and $\bar{S}$, respectively. \label{4DSC}} 
\end{figure}

Clifford logical operators by braiding can now be understood by identifying twists of this setup with those of the two dimensional setup. Specifically, inner twists here correspond to bottom twists in the two dimensional setup and left twists in this setup correspond to left twists in the two dimensional setup. Pictured in this way, it is straightforward to see that swapping the left twists will implement the logical phase operator, $\bar{S}$, since it interchanges the two twists enclosed by $\bar{Z}$. Interchanging the inner-left with outer-right twist will implement the logical Hadamard operator, $\bar{H}$, since it swaps the twists that are enclosed by $\bar{X}$ but not $\bar{Z}$ and vice versa. Thus, the setup does indeed allow for the full single qubit Clifford group to be implemented by braiding.

This picture can be generalised to a $2k$-dimensional surface code with $(k-1)$-dimensional $e$ and $m$ excitations. Specifically, we choose two concentric pairs of hyper-surfaces of the form $S^{k-1}\times S^{k-1}$ as twists, where $S^k$ is a $k$-dimensional sphere and the regions between each concentric pair to be domain walls. The $\bar{X}$ and $\bar{Z}$ operators will then be surfaces of the form $S^{k-1}$ enclosing appropriate twists. The $\bar{S}$ and $\bar{H}$ operators can be implemented by appropriate interchanges of the twists.

While the scheme laid out here cannot be realised in less than four spatial dimensions, it is of some theoretical interest. One avenue of interest is that all surface codes considered here are self-correcting. We may expect that property to carry over to their braiding schemes, since all logical operators are of at least two spatial dimensions. Properties of this braiding scheme may thus offer guidance on how self-correction may be achieved in more general braiding schemes. Moreover, this example may provide more guidance on how to generalise two dimensional braiding schemes to higher dimensions.

\subsection{Three Dimensional Levin-Wen Fermion Model}
The three dimensional Levin-Wen fermion model is a three dimensional topological stabiliser code that admits point-like fermionic excitations \cite{Levin, Haah}, which we label $e$. This means that it is not equivalent to the three dimensional surface code, although the structure of its excitations is similar; both codes admit a point-like $e$ and a loop-like $m$. Defined on a cube with boundaries (similarly to the planar surface code) it encodes two or three logical qubits, depending on whether the length of the cube is an even or odd number of qubits. This code with boundaries admits only (two-dimensional) $\overline{\text{CZ}}$ operators between each pair of logical qubits as locality-preserving logical operators \cite{Webster}. The $\overline{\text{CZ}}$ logical operator corresponds to a (one-dimensional) domain wall that allows for the fermionic $e$ excitations to condense. Thus, we may expect that it should admit the full Clifford group by braiding. We show that this is the case. We note, however, that unlike three-dimensional surface codes, this model does not admit a locality-preserving logical $\overline{\text{CCZ}}$ operator \cite{Webster}, and so this scheme does not allow for a universal gate set.

We construct a setup to allow this braiding analogous to the two dimensional surface code. Specifically, we have two domain walls, each ending in a pair of point-like twists. The computational basis states for a logical qubit can then be encoded in the parity of $e$ excitations in either one of these domain walls. The logical $\bar{Z}$ operator is then a surface enclosing one of the domain walls traced out by an $m$ loop growing from the vacuum passing around the domain wall and then shrinking back to the vacuum. The logical $\bar{X}$ operator is implemented by an $m$ loop growing from the vacuum, then passing so that one point passes through a domain wall so that an $e$ is appended, then passing through the other wall so that the $e$ may annihilate with another $e$ that is attached, and finally shrinking back to the vacuum. These Pauli logical operators are illustrated in Fig.~\ref{LWFM}. We may then immediately see that the single qubit Clifford group may be implemented analogously to in the two dimensional surface code. Specifically, interchanging the two left twists will implement the phase operator, $\bar{S}$, since it introduces a phase of $\pm i$ if and only if the parity of $e$ in the left domain wall is odd. Interchanging the bottom-left and top-right twists implements the Hadamard operator, $\bar{H}$, since it swaps the logical $\bar{X}$ and $\bar{Z}$ operators.

\begin{figure}
\centering
\includegraphics{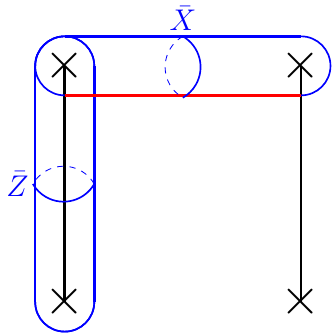}
\caption{The logical operators for a logical qubit encoded in the twists of the three dimensional Levin-Wen fermion model. The twists are points at the end of one dimensional domain walls in a three dimensional space. The logical $\bar{Z}$ operator is a surface of $X$ operators across a sphere enclosing the left twists. The logical $\bar{X}$ operator is a surface of $X$ operators across a sphere enclosing the top twists with a line of $Z$ operators appended between the domain walls. \label{LWFM}}
\end{figure}

This scheme is interesting for a number of reasons. Firstly, it is an instance of how the Clifford group may be implemented without state injection or dimensional jumping in a three dimensional topological stabiliser code. Specifically, the Levin-Wen fermion model allows for the Clifford group to be implemented by braiding point-like defects.  This is interesting as it is known that no translationally-invariant (i.e.~without defects) and scale-symmetric three dimensional topological stabiliser code can admit the full Clifford group by locality-preserving logical operators \cite{Webster} and so offers an example of where braiding defects can allow for fault-tolerant operators not admitted otherwise. The scheme is also notable as an instance of a code where truly string-like Pauli logical operators are relatively rare (since they can only be realised by strings between the pair of twists). Nonetheless, we may expect that the code is not self-correcting, since membrane operators implementing logical Paulis must only be of length $d$ in one dimension and can be arbitrarily small in the other dimension. The interesting properties of this three dimensional braiding setup may inspire and guide future research into braiding schemes in more exotic models.

Finally, the scheme offers new insight into the types of objects and braids that may yield interesting behaviour in more than two dimensions. For example, it is an instance of non-trivial braiding of point-like twists in a three dimensional code. This is interesting since point-like anyons in a three dimensional space are known to have trivial (bosonic or fermionic) statistics. It can, however, be understood as reflecting that the point-like twists are only endpoints to the one-dimensional domain wall, and that it is the rotation of this object that has a non-trivial action on the excitations present in it. Nonetheless, it highlights an interesting limitation to the conventional picture of twists as analogous to non-abelian anyons.

\subsection{Checkerboard Model}

The checkerboard model is a three dimensional stabiliser code that exhibits (foliated) fracton order \cite{Vijay,Shirley1,Dua}.  The excitations in this model have limited mobility, and unlike the previous stabilizer codes we have studied, the continuum limit of this checkerboard model is not described by a topological quantum field theory. Nonetheless, the model satisfies our definition of a topological stabilizer code, and so our results remain applicable to this class of codes. Indeed, as we now show, with a suitable encoding the Clifford group can be implemented by braiding twists in the checkerboard model. This is particularly interesting, as the complicated structure of logical operators and excitations in fracton models has previously limited the study of fault-tolerance with such codes.

The code is defined on a cubic lattice with qubits at the vertices.  We bi-colour the cubes with two colours such that all pairs of cubes that share a face are of different colours. The code has two types of stabilisers: products of $X$ operators and $Z$ operators respectively on all vertices of of one colour, as illustrated in Fig.~\ref{fig:CMStabs}.  No stabilizers are associated with cubes of the other colour. The code admits a range of excitations (see Ref.~\cite{Shirley1} for further details) but we will focus only on those used in our encoding.

\begin{figure}
\centering
\includegraphics{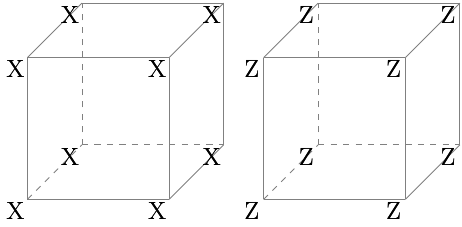}
\caption{The two types of stabilisers of the checkerboard model. Each stabiliser acts on every cube in the lattice of one colour. Cubes of the other colour do not have corresponding stabilisers. \label{fig:CMStabs}}
\end{figure}

In particular, we focus on planons within the model.   Planons are topological excitations that can only move freely within a plane of the code and give rise to additional excitations if they leave this plane \cite{Dua}. The planons we are interested in arise at the endpoints of adjacent, parallel strings of Pauli operators. Such strings commute with all stabiliers provided they form closed loops and are confined to a plane. If such a pair of strings instead terminate at endpoints (while still confined to a plane) then the result is a pair of point-like excitations at each endpoint. Individually, these excitations are lineons (excitations that can only freely move in a straight line), but taken as fused together the result is instead a planon \cite{Shirley2}. Such a planon consists of two particles in adjacent planes, which move together through their respective planes.

We focus our attention on these fused planons as our topological excitations of interest. By analogy to the toric code, we label these planons by $e$ if they correspond to strings of $Z$ operators and $m$ if they correspond to strings of $X$ operators. We indicate the planes to which the excitations are confined by subscripts. Thus, for example, excitation $e_{z=j,j-1}$ describes the planon at the end of a pair of $Z$ strings that are in the $xy$-planes of the lattice labelled by $z=j$ and $z=j-1$, respectively. 

We can consider braiding excitations that are confined to overlapping planes as though they were anyons in a two-dimensional model, as follows. Excitations $e_{z=j,j+1}$ and $m_{z=j,j+1}$ correspond to commuting strings of operators (since  where they cross they overlap on two qubits) and so braid trivially. However, $e_{z=j,j-1}$ and $m_{z=j,j+1}$ correspond to anticommuting strings (they overlap only on one qubit in the $z=j$ plane) and so yield a phase of $-1$ under braiding. Thus, this pair behaves as the $e$ and $m$ excitations of a two dimensional surface code when we consider braiding them as planons. The composite excitation $e_{z=j,j-1}m_{z=j,j+1}$ is a fermion with respect to exchange within the plane, again as with the $em$ fermion of the two dimensional surface code.

Symmetries of the model allow us to identify domains walls.  The checkerboard model is symmetric under exchanging $X$ and $Z$ operators on every qubit.  Assuming periodic boundary conditions, it is also symmetric under shifting all qubits one unit down and one unit back, e.g.~moving the qubit at position $(x,y,z)$ to $(x,y-1,z-1)$. The composition of these two symmetries is another symmetry, and applying this composite symmetry to a compact region of the code gives rise to a two-dimensional domain wall at the boundary. This domain wall will transform an excitation $m_{z=j+1,j}$ that crosses from outside the wall to inside to $e_{z=j,j-1}$. One dimensional twists at the endpoints of such a domain wall can condense a fermion $m_{z=j+1,j}e_{z=j,j-1}$.

We may now construct an encoding in these twists, and an implementation of Clifford logical gates, analogous to that of the two dimensional surface code construction in Sec.~\ref{2DSCTwist}. Specifically, we introduce a pair of domain walls in a pair of $yz$ planes separated in the $x$ direction by some distance $d$. We allow the walls to stretch around the whole code in the $y$ direction, but terminate them in a pair of one-dimensional line-like twists in the $z$ direction. We also ensure that these domain walls are oriented such that excitation $m_{z=j+1,j}$ maps to $e_{z=j,j-1}$ as it crosses from left to right for the wall on the left, but undergoes this transformation as it crosses from right to left through the wall on the right. We can now encode a logical qubit with logical operators analogously to in the two dimensional surface code, as shown in Fig.~\ref{CMEncoding}. Exchanging twists will now implement the single-qubit Clifford group analogously to in the two dimensional surface code. Specifically, exchanging the twists on the same domain wall will implement the $\bar{S}$ gate, while exchanging the front-left twist with the back-right will implement the $\bar{H}$ gates. We note that on a lattice of length $L_y$ in the $y$ direction, this scheme in fact allows for $2L_y$ logical qubits to be encoded in this twist setup, by considering all $L_y$ choices of $j$ and allowing for the role of $e$ and $m$ to be interchanged. Exchanging the twists implements the same Clifford operator simultaneously on all of these qubits.

\begin{figure}
\centering
\includegraphics{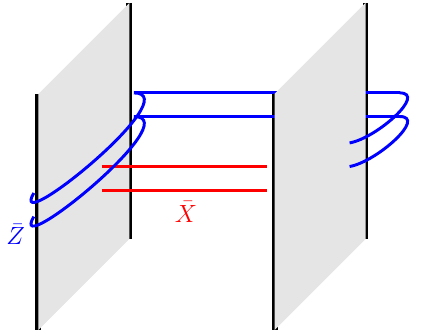}
\caption{The logical operators for a logical qubit encoded in the twists of the checkerboard model (with periodic boundary conditions). The twists are the black (topologically non-trivial) loops at the boundaries of the grey domain walls. Pairs of blue lines represent paths of the planon $m_{z=j+1,j}$ and pairs of red lines represent paths of the planon $e_{z=j,j-1}$.
\label{CMEncoding}}
\end{figure}

We emphasise that this scheme demonstrates the broad applicability of braiding defects, and of the approach we have taken in this section. Specifically, it shows that even fracton models, which have no analogue in two dimensions and do not admit descriptions in the context in which braiding is most familiar -- topological quantum field theories -- nonetheless may be studied using the tools and results we have discussed and developed. This offers an approach to developing schemes for fault-tolerant computation using such codes.

\subsection{Discussion of General Scheme}

The schemes presented in the previous subsections clarify a sufficient condition for a code to allow the full Clifford group by braiding. Specifically, this requirement is that the code admits domain walls with twists that can condense a generalised fermion (of any spatial dimension). For completeness, we now briefly summarise how a defect setup admitting the Clifford group can be constructed if such a requirement is satisfied abstractly.  A sketch of the two dimensional version of this scheme is provided in Fig.~\ref{GeneralClifford}.

\begin{figure}
\centering
\includegraphics{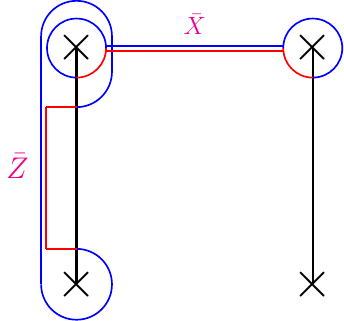}
\caption{ Illustration of two dimensional case of general scheme for encoding with twists on domain walls that can condense a fermion, $a$. Here the domain walls map $b$ (blue) $\to ab^*$ (red), so that $a$ is represented by blue and red lines together. The logical $\bar{X}$ operator corresponds to transferring $a$ between the domain walls, and $\bar{Z}$ corresponds to transferring $a$ between twists on the same domain wall. Exchanging twists in this setup, in analogy to Fig.~\ref{DWEncoding}, will implement the full single qubit Clifford group. \label{GeneralClifford}}
\end{figure}

We begin by noting that the significance of a $D$-dimensional code admitting twists that allow a generalised fermion to condense is that it implies that the corresponding domain wall interchanges two excitations that braid with one another to give a phase of $-1$. Indeed, a twist that allows generalised fermion $a$ to condense must act on some eigenstate excitation $b$ by the mapping $b \to ab^*$ (as discussed in Sec.~\ref{Sec:Twists}). Since $b$ and $ab^*$ must have the same phase under self-exchange (since they are interchanged by a domain wall \cite{Yoshida}) but the excitation $a$ produced by composing them is a generalised fermion, they must give a phase of $-1$ when braided around one another. This implies that the operators used to propagate them must anticommute \cite{Webster} and that the sum of the dimensions of excitations $a$ and $b$ must be $D-2$. This in turn implies that the dimension of the domain wall corresponding to the twist at which $a$ condenses must be twice the dimension of $a$ \cite{Webster}. These facts justify the validity of the following construction.

Assume a code admits some $k$-dimensional domain wall with twists that allows for some $j$-dimensional generalised fermion, $a$, to condense. Such a domain wall must act on an eigenstate excitation $b$ as $b \to ab^*$ to allow for excitation $a$ to condense. Construct a pair of such domain walls such that each has a pair of concentric twists topologically equivalent to $S^{j-1} \times S^{j-1}$ at their boundaries. Add a further puncture threaded through the holes of the domain walls if $j>1$ at which $a$ cannot condense. This prevents excitations $a$ that enclose the hole that condense on one of the domain walls from being absorbed by the vacuum. The logical $\bar{X}$ operator can then be specified by a process that propagates excitation $a$ between the pair of domain walls. This is equivalent up to stabilisers to a pair of excitations $b$ and $b^*$ condensing from the vacuum, excitation $b$ being propagated through the pair of domain walls, and then returning to its intial position and size without crossing through the walls again. This is the form in which the logical $\bar{X}$ operator was presented for the examples illustrated above. The logical $\bar{Z}$ operator can be realised by an operator that propagates excitation $b$ in a hypersurface that encloses one of the domain walls. This is equivalent to an operator that propagates the generalised fermion $a$ between the two twists of this domain wall, as shown in Fig.~\ref{GeneralClifford}. This choice of logical $\bar{X}$ and logical $\bar{Z}$ are valid since they anticommute, by the arguments of the previous paragraph. Exchanging two defects from the same domain wall will now implement a logical $\bar{S}$ operator, and exchanging the inner defect from one domain wall with the outer defect from the other will implement a logical $\bar{H}$ operator. These operators generate the full single qubit Clifford group. This can be extended to the full Clifford group on a larger set of qubits by adding additional holes and using these as ancilliary defects to allow for entangling gates, analogously to in \cite{Brown}.

\section{A Scheme for a Universal Fault-Tolerant Gate Set}
\label{Sec:Universal Scheme}

The fault-tolerant constructions we have presented in the previous section are constrained by the no-go theorem of Sec.~\ref{Sec:NoGo1}, and so do not yield universal gate sets for quantum computing.  We now turn to a scheme by which this no-go theorem can be circumvented, allowing for universal quantum computing with defects in a topological stabiliser code. Specifically, we present a scheme for realising a universal gate set on logical qubits encoded in defects in a stack of three 3D surface codes without magic state distillation. Our scheme is similar to that of Ref.~\cite{Vasmer}, but has the advantage that it can be performed entirely by manipulating defects in the bulk of the code and so allows for arbitrarily many logical qubits to be encoded in a single memory. In particular, we show that braiding of these defects can achieve the necessary entangling gates between these logical qubits for universality, without the need for lattice surgery between different code blocks. This illustrates how, despite the limitations discussed in Sec.~\ref{Sec:Limitations}, braiding defects in higher dimensional codes can still be a useful tool in the pursuit of universal fault-tolerant quantum computation.

The basis for this scheme is the use of stabiliser state injection to allow for a logical Hadamard operator to be implemented on qubits encoded in defects in a three-dimensional surface code. This injection circuit is not constrained by our no-go results, as it exploits adaptive implementation of a logical operator applied conditionally on the outcome of a logical measurement.  Such an adaptive process lies outside the scope of our no-go theorem, by supplementing locality-preserving and braiding logical operators with non-local classical processing \cite{BombinSingleShot} (see Appendix \ref{Appendix1} for further discussion). In particular, such operators could not otherwise be realised by braiding or as a locality-preserving logical operator. In addition to the scheme of Ref.~\cite{Vasmer}, the technique has also previously been applied to allow for universality in the 3D colour code \cite{BombinNoBraid} and a class of 2D Bacon-Shor codes \cite{YoderBacon}. To illustrate the key ideas of our scheme, we begin by discussing the case of three logical qubits, before then proceeding to the general case.

\subsection{Logical Qubits and Transversal Gates}
\label{Sec:LogQTrans}
We begin with a code that is locally equivalent to a stack of three 3D surface codes \cite{Webster}, labelled one, two and three. Three logical qubits can be encoded in five defects in this code, as illustrated in Fig.~\ref{UniDefects}. Specifically, this setup consists of four toroidal defects arranged with a fifth toroidal defect threaded through the hole of each other defect. Appropriately choosing boundaries of these defects to be rough and smooth with respect to the three surface codes allows for three logical qubits to be encoded, with logical Pauli operators as illustrated in Fig.~\ref{Fig15}. Further details of these defects are provided in Appendix \ref{Appendix2}. 

This setup inherits locality-preserving logical operators (in fact, transversal ones) from the underlying three dimensional surface codes. Along with logical Pauli operators, it is known that such codes (with boundaries) admit transversal $\overline{CZ}$ and $\overline{CCZ}$ logical operators \cite{KubicaUnfold,Vasmer}. This ensures that the stabiliser group of the code is preserved by such operators in our setup (noting that altered stabilisers at the defects in our setup correspond to stabilisers on code boundaries). Thus, we can leverage such transversal operators of the underlying code into transversal operators in our setup.

Specifically, these logical qubits admit transversal $\overline{\text{CZ}}_{ij}$ operators, implemented by applying $\text{CZ}_{ij}$ transversally across the support of $\bar{X}_k$ (for $\{i,j,k\}=\{1,2,3\}$). Transversally applying $\text{CCZ}_{123}$ across the whole defect setup implements the logical operator $\overline{\text{CCZ}}_{123}$. We note that these transversal operators are all analogous to the transversal operators implementable across three-dimensional surface codes with boundaries used for encoding instead of defects.

\begin{figure}
\centering
\includegraphics{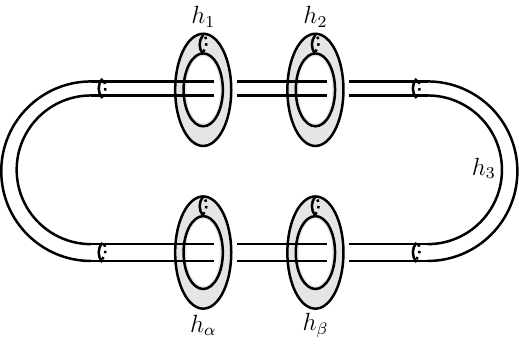}
\caption{Defect setup for the universal scheme we present. All five defects are punctures in a three dimensional colour code (equivalent to three surface codes). $h_1$ and $h_2$ are rough with respect to surface code one and smooth with respect to surface codes two and three. $h_\alpha$ and $h_\beta$ are rough with respect to surface code two and smooth with respect to surface codes one and three. $h_3$ is threaded through all four other holes, and is rough with respect to surface codes one and two and smooth with respect to surface code three. \label{UniDefects}}
\end{figure}

\begin{figure}
\centering
\subfigure[Pauli operators for logical qubit $1$ and ancilla qubit $a$. These are the two qubits encoded using excitations from code 1. Specifically, blue here is a path of $m_1$ and red is a path of $e_1$. $\bar{X}_1$ is a torus around $h_1$ and $\bar{Z}_1$ is a line between $h_1$ and $h_2$. $\bar{X}_a$ is a cylinder between $h_4$ and $h_5$ and $\bar{Z}_a$ is a loop around $h_4$. \label{Q1a}]{
\includegraphics[scale=.9]{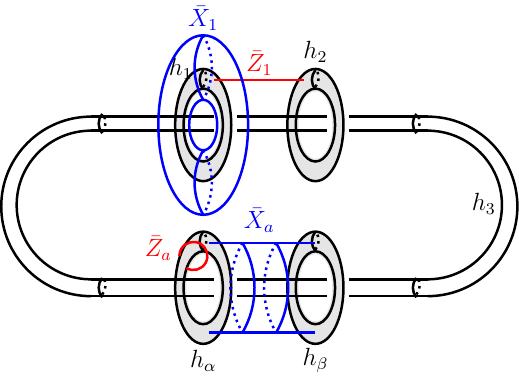}}
\subfigure[Pauli operators for logical qubit $2$ and ancilla qubit $b$. These are the two qubits encoded using excitations from code 2. Specifically, blue here is a path of $m_2$ and red is a path of $e_2$. $\bar{X}_2$ is a cylinder between $h_1$ and $h_2$ and $\bar{Z}_2$ is a loop around $h_1$. $\bar{X}_b$ is a torus around $h_4$ and $\bar{Z}_b$ is a line between $h_\alpha$ and $h_\beta$. \label{Q2b}]{
\includegraphics[scale=.9]{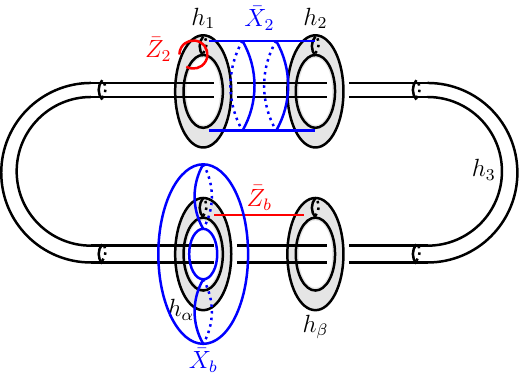}}
\subfigure[Pauli operators for logical qubit $3$. This is the qubit encoded using excitations from code 3. Specifically, blue here is a path of $m_3$ and red is a path of $e_3$. $\bar{X}_3$ is a disk stretching out to $h_3$ and $\bar{Z}_3$ is a loop around $h_3$. \label{Q3}]{
\includegraphics[scale=.9]{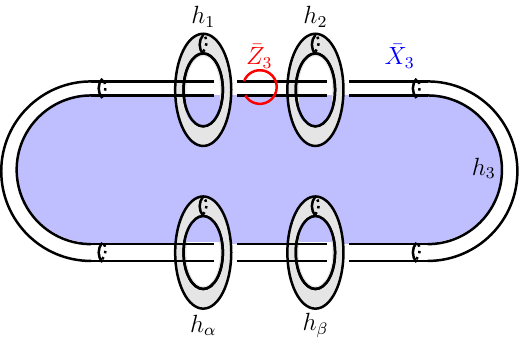}}
\caption{Logical Pauli operators of the encoded qubits.\label{Fig15}}
\end{figure}

\subsection{Logical Hadamard Operator by State Injection}
\label{Sec:LogH}

To complete a universal gate set on three logical qubits, we now need only to supplement the above transversal gates with a logical Hadamard operator for each logical qubit. This can be done using stabiliser state injection, using the circuit shown in Fig.~\ref{Fig:InjectionCircuit}. However, there are a number of issues that must be addressed to allow for this circuit to be used to realise logical Hadamard operators on each logical qubit. These are addressed in detail in Appendix \ref{Appendix3}; here we  summarise the issues and their solutions.

First, we require ancillary logical qubits prepared in the logical $|\bar{+}\rangle$ state. Fig.~\ref{Q1a} and Fig.~\ref{Q2b} show how two ancilla qubits, $a$ and $b$, can be encoded. Either of these ancillae can be prepared in state $|\bar{+}\rangle$ by measuring their logical $\bar{X}$ operators and applying Pauli corrections if necessary. Logical $\overline{\text{CZ}}_{3a}$ and $\overline{\text{CZ}}_{3b}$ operators can be implemented transversally by implementing $\text{CZ}_{31}$ on the support of $\bar{X}_b$ and $\text{CZ}_{23}$ on the support of $\bar{X}_a$ respectively. This enable a logical $\bar{H}_3$ operator to be applied using these ancillae, as described in Appendix \ref{Appendix3e}.

Applying $\bar{H}_1$ and $\bar{H}_2$ is more challenging, because transversal $\overline{\text{CZ}}$ operators cannot be realised transversally between logical qubits one and two and the ancillae. This problem is overcome by braiding. Specifically, braiding defect $h_1$ around $h_\alpha$ implements the logical operator $\overline{\text{CNOT}}_{a1}\overline{\text{CNOT}}_{2b}$. This can be seen to be analogous to braiding holes in the two dimensional surface code, which implements entangling $\overline{\text{CNOT}}$ operators between encoded logical qubits. This entangling operator allows for logical information to be transferred between the logical qubits 1 and 2 and the encoded ancilla qubits, which then allows for the circuit necessary for logical $\bar{H}_1$ and $\bar{H}_2$ to be implemented using transversal $\overline{\text{CZ}}$ operators. This is described in more detail in Appendix \ref{Appendix3e}. These logical $\bar{H}$ operators combined with the transversal $\overline{\text{CCZ}}$ operator suffice for a universal gate set.

\begin{figure}
\centering
\includegraphics{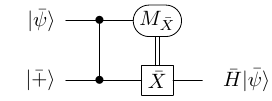}
\caption{A state injection circuit allowing for a logical Hadamard operator to be implemented using only logical $\overline{\text{CZ}}$ and $\bar{X}$ conditional on logical $X$ measurement. The double line indicates a classical control. Specifically, $\bar{X}$ is applied if and only if the outcome of the logical measurement is $-1$. \label{Fig:InjectionCircuit}}
\end{figure}

\subsection{Universal Gate Set on $N$ Logical Qubits}
Having established how a universal gate set may be implemented on three logical qubits, we now consider a larger number of logical qubits. In particular, we show that $N$ logical qubits admitting a universal gate set may be encoded in $N+2$ defects (for odd $N\geq 3$). Specifically, we maintain a single threaded defect through the centres of $\frac{N+1}{2}$ pairs of holes. In one pair of holes we encode a pair of ancillae as in the three logical qubit case, while we encode a pair of logical qubits in each other pair of defects and one logical qubit using the threaded defect analogously to qubit three in the previous case. 

The universal gate set on this set of logical qubits again emerges from combining transversal $\overline{\text{CCZ}}$ logical operators with Hadamard logical operators implemented by state injection. The full details of how this can be done are presented in Appendix \ref{Appendix4}. We here consider only at a high level the two major issues that emerge in the general case that did not in the three qubit case.

First, universality requires that we can implement $\overline{\text{CCZ}}$ between any three logical qubits; even those associated with different defect pairs. This is similar to the problem of implementing a $\overline{\text{CCZ}}$ operator between logical qubits in different code blocks in the scheme of Vasmer and Browne \cite{Vasmer}. They address this using lattice surgery, which allows for logical qubits to be swapped between different code blocks. An advantage of our scheme based on defects is that braiding may be used to swap logical qubits between different defect pairs, using $\overline{\text{CNOT}}$ operators. This allows for logical qubits to be swapped around to bring them into a configuration that admits a $\overline{\text{CCZ}}$ operator; without requiring separate code blocks of lattice surgery.

Second, the presence of many logical qubits in the same code block adds a challenge in implementing transversal gates. In particular, the transversal $\overline{\text{CCZ}}$ operator described in Sec.~\ref{Sec:LogQTrans} has support on all qubits across the whole defect setup. However, applying such an operator in the presence of many logical qubits would implement all possible $\overline{\text{CCZ}}$ operators simultaneously. Specifically, it would entangle the logical qubits in each defect pair with the logical qubit encoded using the threaded defect, without being able to select out the logical qubits of a particular pair. This can be overcome by isolating the pair of logical qubits associated out from the other logical qubits temporarily to allow for a $\overline{\text{CCZ}}$ operator to be implemented on this pair and the logical qubit associated with the threaded defect, without involving other logical qubits. This can be done by braiding defects to implement $\overline{\text{CNOT}}$ operators between the qubits to be isolated and the encoded ancillae. Once logical qubits are stored in a defect pair along with an ancilla, the ancilla may be used to control whether or not the logical qubit is affected by the implementation of the transversal $\overline{\text{CCZ}}$ operator This allows for $\overline{\text{CCZ}}$ operators to be implemented on just three logical qubits, and not across all logical qubits, as required.

\subsection{Discussion of this approach to universality}

We emphasise that this universal scheme has the advantage over similar alternative schemes, such as that of Ref.~\cite{Vasmer}, in that it provides for a universal gate set on an arbitrary number of logical qubits within a single quantum memory.  As a result, we do not require a large number of code blocks, including ancilla blocks, to be entangled using lattice surgery. This advantage has its origins in the use of braiding to entangle logical qubits associated with different defect pairs. Such braiding is made possible only by our consideration of encoding logical qubits in defects. In particular, it plays the role it does in our universal scheme only because we have considered braiding in higher dimensional codes.  That is, it is by moving beyond two dimensions that allows for non-Clifford locality-preserving logical operators, which can be supplemented by the entangling Clifford gates by braiding as we have presented.

Our approach also has the advantage that it does not require additional encoded ancilla qubits as the number of logical qubits increases. This efficiency is made possible by our technique of shifting logical information onto ancilla qubits when necessary to allow for desired gates to be implemented only on specific logical qubits. This technique could be more broadly useful for other schemes that have many logical qubits encoded in the same memory, that often face the problem of transversal gates not being specific to desired logical qubits.

\section{Conclusion}

In this paper, we have explored both the limitations and potential for fault-tolerant quantum computing by braiding defects in topological stabiliser codes. In particular, we have shown that universal quantum computing cannot be achieved by braiding alone, nor by supplementing braiding logical operators by locality-preserving logical operators. This extends the result of Bravyi and K{\"onig}~\cite{Bravyi} to a far larger class of fault-tolerant operators. Indeed, we have also noted that this constraint applies more generally to abelian quantum double models, extending the results of Ref.~\cite{Webster}.

Our result has implications for realising universal fault-tolerant quantum computing. In particular, it proves that schemes based on braiding defects in topological stabiliser codes or abelian quantum double models must be supplemented by other techniques to permit universality. The most widely-employed approach to this is magic state distillation, which can lift schemes allowing for the full Clifford group to universality \cite{Raussendorf1,Raussendorf2}. We have furthered the potential for this approach by demonstrating that the Clifford group may be realised by braiding in a large class of topological stabiliser codes, specifically, any code with topological defects that can condense a generalised fermion.  While progress has been made towards reducing the prohibitive overhead of magic state distillation \cite{Litinski}, this cost remains a significant contributor to the large number of physical qubits required for fault-tolerant quantum computing. Our work further emphasises the importance of continuing work to improve on this overhead, or to seek alternative approaches. 

Beyond magic state distillation, other approaches to universality can also be considered, including those that make more explicit use of topological protection. We have presented such an approach by adapting the scheme of Vasmer and Browne  based on stabiliser state injection to develop a universal gate set on logical qubits encoded in defects \cite{Vasmer}. This scheme uses a combination of locality-preserving and braiding logical operators, but supplements these operators with non-local classical processing to circumvent our no-go results. An advantage of this scheme is that arbitrarily many logical qubits can be encoded into a single code block, removing the need for lattice surgery. It would be valuable to consider whether this scheme may offer an improved overhead compared with Ref.~\cite{Vasmer}. This scheme could also be used as a model for adapting existing approaches to universality on qubits encoded in boundaries to qubits encoded in defects. In particular, future work could seek to adapt other techniques used to achieve universality such as dimensional jumping \cite{BombinGCC} and just-in-time gauge fixing \cite{Bombin2018,BrownJIT} to defect encodings. Future work could also explore and seek to develop approaches to universality that are specifically designed for topological defect schemes, such as topological charge measurement \cite{Cong}. The potential of all these different approaches should also be assessed by investigating the overhead of each technique to determine if they offer an advantage over the conventional alternative of using the two dimensional surface code with magic state distillation.

\begin{acknowledgments}
We acknowledge discussions with Benjamin Brown, Markus Kesselring, Sam Roberts, Michael Vasmer and Dominic Williamson. This work is supported by the Australian Research Council (ARC) via the Centre of Excellence in Engineered Quantum Systems (EQuS) project number CE170100009 and Discovery Project number DP170103073. PW also acknowledges support from The University of Sydney Nano Institute via The John Makepeace Bennett Gift.
\end{acknowledgments}

\appendix
\renewcommand{\thesection}{}
\section{}  
\label{Appendix}
In this appendix, we provide a complete presentation of the construction summarised in Sec.~\ref{Sec:Universal Scheme}. Specifically, we begin by discussing the background to this approach to universality, and how it violates the assumptions of the no-go theorem of Sec.~\ref{Sec:NoGo1}. We then present a scheme using punctures in a code equivalent to a stack of three 3D surface codes that allows for this approach. We go on to show how that scheme admits a universal gate set on three logical qubits generated by Hadamard operators on each qubit and a $\text{CCZ}$ between the three logical qubits. Finally, we explain how this scheme may be extended to allow for a universal gate set on  $N$ logical qubits in a setup with a total of $N+2$ punctures (for any odd $N\geq 3$). 

\subsection{Approach to Circumventing the No-Go Theorem}
\label{Appendix1}
Theorem \ref{Theorem1}, which shows that a universal gate set cannot be realised using only braiding and locality-preserving logical operators, can be circumvented by allowing for non-local classical processing. Specifically, the scheme we present uses state injection to achieve logical Hadamard operators in topological stabiliser codes that do not admit them as locality-preserving logical operators. Such a technique allows for universality from only the ability to prepare both logical $\bar{X}$ and $\bar{Z}$ eigenstates, perform logical $\bar{X}$ and $\bar{Z}$ measurements and implement locality-preserving $\overline{CCZ}$ operators, provided that logical operators may be applied conditionally on logical measurement outcomes \cite{YoderBacon}. Our scheme applies this technique to a setup in which logical qubits are encoded in defects, using a combination of locality-preserving and braiding logical operators.

We note that the reliance of this scheme on applying logical operators conditionally on logical measurement outcomes makes it dependent on non-local classical communication. Specifically, information from the measurements of physical qubits around the non-local support of logical Pauli operators must be combined to determine the measurement outcome. The result of this combination of non-local information is then manifested in the application of logical operators on the quantum system. The scheme is thus not fundamentally local, since the application of operators on physical qubits in one part of the code can be affected by the measurement outcomes on distant physical qubits. However, since all operations on the quantum memory are local, it is nonetheless fault-tolerant, provided the classical computer in which the measurement outcomes are combined and processed is sufficiently reliable that local classical errors that could be propagated to non-local quantum errors are highly unlikely. This approach to circumventing Theorem \ref{Theorem1} is similar to approaches taken to circumventing similar no-go theorems such as that of Eastin-Knill and Bravyi-Konig \cite{BombinGCC}.

\subsection{Defects and Encoding}
\label{Appendix2}
We now describe the defects and encoding that allow us to realise a universal gate set. This setup can be realised in a code locally equivalent to three copies of a 3D surface code. We refer to these as codes 1, 2 and 3.

We begin by considering a pair of toroidal punctures. We choose these punctures to be rough with respect to surface code 1 and smooth with respect to surface codes 2 and 3. This means that they allow for $e_1$, $m_2$ and $m_3$ to condense, but not $m_1$, $e_2$ or $e_3$. We note that $s_{ij}$ excitations can condense at a boundary if and only if the boundary is rough with respect to code $i$ or $j$ \cite{Webster}. Thus, these punctures allow for $s_{12}$ and $s_{31}$ to condense, but not $s_{23}$. We label the two punctures $h_1$ and $h_2$.

We introduce a further toroidal puncture threaded through these two punctures. We choose this puncture to be rough with respect to codes 1 and 2 and smooth with respect to code 3. This means it allows $e_1$, $e_2$, $m_3$, $s_{12}$, $s_{23}$ and $s_{31}$ to condense, but not $m_1$, $m_2$ or $e_3$. This threaded defect thus prevents $m_2$ that condense at one of the first two punctures from being absorbed into the vacuum. We label this threaded defect $h_3$. The defect setup is presented in Fig.~\ref{UniDefects}.

These three defects allow for three logical qubits to be encoded. We define these logical qubits implicitly by specifying their logical Pauli operators. These are shown in Fig. \ref{Fig15}. Each of the three logical qubits are implemented by physical operators acting on the corresponding surface code. Specifically, qubit one has logical $\bar{X}_1$ operator realised by a torus of physical $X_1$ operators acting around defect $h_1$. It has logical $\bar{Z}_1$ operator implemented by a line of physical $Z_1$ operators between $h_1$ and $h_2$. Qubit two has $\bar{X}_2$ operator that is a cylinder of physical $X_2$ operators between $h_1$ and $h_2$ and $\bar{Z}_2$ operator that is a loop of $Z_2$ operators around $h_1$. Qubit three has $\bar{X}_3$ operator that is a membrane of physical $X_3$ operators terminating in a loop on defect $h_3$ and $\bar{Z}_3$ operator that is a loop of $Z_3$ operators around $h_3$. 

We can verify that these operators define a valid encoding of three qubits by checking commutation relations of these logical operators. Specifically, logical operators acting on different logical qubits commute since they consist of physical operators acting on different surface codes. Logical $\bar{X}_i$ and $\bar{Z}_i$ operators consist of physical $X$ and $Z$ operators respectively acting on the same surface code, and intersect at a point, and so anticommute.

\subsection{Universal Gate Set for Three Qubits}
\label{Appendix3}
We now describe how to realise universal quantum computing on three logical qubits encoded in these defects. Specifically, we first show how transversal $\overline{\text{CZ}}$ operators may be implemented pairwise between logical qubits and then how transversal $\overline{\text{CCZ}}$ may be implemented between all three. We then demonstrate that a logical Hadamard operator may be implemented on any of the logical qubits with the addition of two ancilla qubits encoded in an additional pair of defects. Together, this gives a universal gate set.

\subsubsection{Transversal Gates}
\label{Appendix3a}
Our encoding allows for transversal $\overline{\text{CZ}}$ operators between each pair of logical qubits and a transversal $\overline{\text{CCZ}}$ between all three qubits. Specifically, we may realise a transversal $\overline{\text{CZ}}_{ij}$ operator by applying $\text{CZ}_{ij}$ transversally across the support of $\bar{X}_k$, as illustrated in Fig.~\ref{CZ}. To see this, observe from Fig.~\ref{Fig15} that the intersection of the support of $\overline{\text{CZ}}_{ij}$ and the support of $\bar{X}_i$ is the support of $\bar{Z}_j$. We note that this situation is analogous to that in the 3D colour code with cubic boundary conditions, that similarly encodes three logical qubits and admits $\overline{\text{CZ}}$ operators between each pair \cite{KubicaUnfold}. Since the support of $\bar{X}_i$ is the same as that of $\overline{\text{CZ}}_{jk}$ for all $i,j,k$, we may implement $\overline{\text{CCZ}}_{ijk}$ transversally by applying $\text{CCZ}_{ijk}$ at each site of the code. This again, is analogous to the 3D colour code with cubic boundary conditions that again allows for a transversal $\overline{\text{CCZ}}$ operator.

\begin{figure}
\centering
\includegraphics{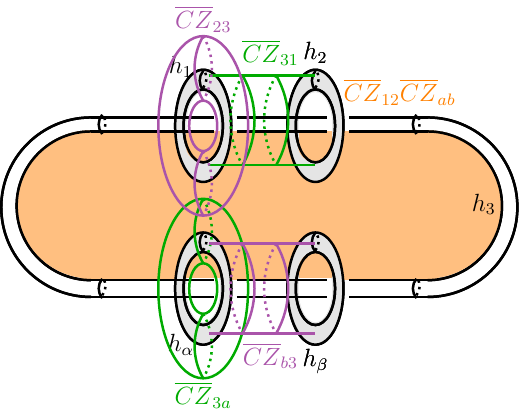}
\caption{Transversal logical $\overline{\text{CZ}}$ operators acting on the encoded qubits. Purple shows path of $s_{23}$, green shows path of $s_{31}$ and orange shows path of $s_{12}$. $\overline{\text{CZ}}_{12}\overline{\text{CZ}}_{ab}$ can be implemented by growing $s_{12}$ from the vacuum out to defect $h_3$. The propagation of $s_{ij}$ is realised by transversal implementation of $\text{CZ}_{ij}$. $\overline{\text{CZ}}_{13}$ may be implemented by a cylinder propagating $s_{31}$ between $h_1$ and $h_2$. $\overline{\text{CZ}}_{b3}$ may be implemented by a cylinder propagating $s_{23}$ between $h_\alpha$ and $h_\beta$. $\overline{\text{CZ}}_{23}$ may be implemented by a torus taking $s_{23}$ around $h_1$. $\overline{\text{CZ}}_{a3}$ may be implemented by a torus taking $s_{31}$ around $h_\alpha$.\label{CZ}}
\end{figure}

\subsubsection{Entangling with Ancilla Qubits}
\label{Appendix3b}
We now turn our attention to how we may supplement the transversal $\overline{\text{CCZ}}$ operator with fault-tolerant Hadamard operators on each qubit to give universality. This is done using a state injection circuit requiring ancilla qubits. We first introduce how these ancilla qubits are encoded and how they may be entangled with the data qubits by both braiding and transversal operators. Finally, we describe how the circuit in Fig.~\ref{Fig:InjectionCircuit} may be applied to take a state, $|\bar{\psi}\rangle$ on a data qubit to $\bar{H}|\bar{\psi}\rangle$ on an ancilla qubit. We then present a full procedure for implementing a Hadamard on a logical qubit such that the final state $\bar{H}|\bar{\psi}\rangle$ is on the original logical qubit, rather than the ancilla.

To define the ancilla qubits, we introduce two additional defects similar to holes $h_1$ and $h_2$ but with different boundaries. Specifically, we choose these to be toroidal punctures, labelled $h_\alpha$ and $h_\beta$, such that $h_3$ is threaded through their holes similarly to for $h_1$ and $h_2$. We choose them both to be rough with respect to code 2 and smooth with respect to codes 1 and 3. That is, they allow for $m_1, e_2, m_3, s_{12}$ and $s_{23}$ to condense. We note that this is similar to punctures $h_1$ and $h_2$ but with boundaries for codes one and two that are swapped. We then define two ancilla qubits, labelled $a$ and $b$, similarly to logical qubits $1$ and $2$ (but with codes one and two interchanged). This is shown in Fig.~\ref{Q1a} and Fig.~\ref{Q2b}.

Having introduced additional encoded qubits, we now revisit the transversal gates. Firstly, we note that we have transversal operators $\overline{\text{CZ}}_{3a}$ and $\overline{\text{CZ}}_{3b}$ analogously to $\overline{\text{CZ}}_{31}$ and $\overline{\text{CZ}}_{23}$, as shown in Fig.~\ref{CZ}. This allows us to entangle qubit 3 with the ancilla qubits. We also note that the $\overline{\text{CZ}}_{12}$ operator identified previously acts as $\overline{\text{CZ}}_{12}\overline{\text{CZ}}_{ab}$ now that we have ancilla qubits present. We can ensure that this operator acts only as $\overline{\text{CZ}}_{12}$ however by preparing one of the ancilla qubits in state $|\bar{0}\rangle$ so that $\overline{\text{CZ}}_{ab}$ acts trivially.

The introduction of additional defects including ancilla qubits also allows for braiding logical qubits that entangle data qubits with ancilla qubits. Specifically, we note that braiding $h_2$ around $h_\alpha$ acts as $\overline{\text{CNOT}}_{a1}\overline{\text{CNOT}}_{2b}$. To see this, note that it is analogous to the braiding of holes in the two dimensional surface code. Specifically, here, when a defect is braided around another, a cylindrical operator that terminates on one of the holes being braided will have a torus around the other braided defect appended to it. Thus, considering Fig.~\ref{Q1a} and Fig.~\ref{Q2b}, the braiding operator maps $\bar{X}_2 \to \bar{X}_2\bar{X}_b$ and $\bar{X}_a \to \bar{X}_a\bar{X}_1$. Also, a line that terminates on one of the holes being braided will have a loop around the other braided defect appended to it. Thus, again considering Fig.~\ref{Q1a} and Fig.~\ref{Q2b}, the braiding operator maps $\bar{Z}_1 \to \bar{Z}_1\bar{Z}_a$ and $\bar{Z}_b \to \bar{Z}_b\bar{Z}_2$. We note that we can also implement $\overline{\text{CNOT}}_{\alpha 1}$ and $\overline{\text{CNOT}}_{2\beta}$ individually by appropriately preparing ancilla qubits to make the other part of the operator act trivially.

\subsubsection{Logical Measurements and Preparation}
\label{Appendix3c}
In addition to the transversal and braiding logical operators we have identified, we can also implement logical measurements in the $\bar{Z}$ and $\bar{X}$ eigenbases for any of the encoded qubits. This can be done by leveraging the transversal $\bar{X}$ and $\bar{Z}$ operators. Specifically, we can perform a (non-fault-tolerant) logical $\bar{Z}_i$ measurement by measuring $Z_i$ on each qubit on a path that supports a logical $\bar{Z}_i$ operator (or similarly for $\bar{X}_i$). This can be done fault-tolerantly by using $d$ (physical) ancilla qubits for each (physical) code qubit, where $d$ is the weight of the logical operator. This is shown in Fig.~\ref{MZ} for the case where $d=3$. We note that these logical measurements allow us to fault-tolerantly prepare states $|\bar{0}\rangle$ and $|\bar{+}\rangle$ by performing measurements in the $\bar{Z}$ and $\bar{X}$ bases respectively and performing necessary Pauli corrections \cite{Vasmer}.

\begin{figure}
\centering
\includegraphics{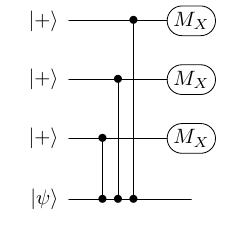}
\caption{Fault-tolerant circuit for measurement of ${Z}$ on the qubit (bottom wire) initially in state $|\psi\rangle$ in the case where the code distance is $d=3$. Logical measurement of $\bar{Z}$ on an encoded qubit can be performed by performing this measurement circuit on each physical qubit on a region that supports a logical $\bar{Z}$ operator. Fault-tolerant measurement of  ${X}$ can be performed with the same circuit but with each $\text{CZ}$ operators replaced with a $\text{CNOT}$ operator whose target is the bottom qubit. \label{MZ}}
\end{figure}

\subsubsection{Gadgets}
\label{Appendix3d}
We now define two gadgets that we can use as tools to realise a Hadamard logical operator. The first of these gadgets is that described in Sec.~\ref{Sec:LogH} and shown in Fig.~\ref{Gadget Hxy}. We call this gadget $H_{xy}$. Here, $x$ is initially a data qubit in logical state $|\bar{\psi}\rangle$ that ends up encoding no information after the gadget is applied. Qubit $y$ is initially an ancilla qubit that ends up storing logical information and in the state $\bar{H}|\bar{\psi}\rangle$.

The other gadget that we make use of we label $I_{xy}$ and show in Fig.~\ref{Ixy}. This gadget takes logical information on data qubit $x$ and moves it to an ancilla qubit $y$. We note that we cannot implement unitary logical SWAP operators between data qubits $1$ or $2$ and ancilla qubits $a$ or $b$ since the CNOT operators that we can implement between them by braiding can only act in one direction (i.e.~the control and target logical qubits cannot be swapped). Nonetheless, by using this gadget we begin with qubit $x$ in state $|\bar{\psi}\rangle$ and end with $y$ in this same state. Qubit $x$ ends up encoding no logical information.

\begin{figure}
\centering
\includegraphics{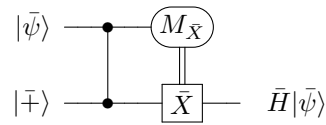}
\caption{The gadget $H_{xy}$ which takes a logical state $|\bar{\psi}\rangle$ onto the state $\bar{H}|\bar{\psi}\rangle$ on an ancilla qubit. Note that this gadget is the same circuit as presented in the text in Fig.~\ref{Fig:InjectionCircuit}. \label{Gadget Hxy}}
\end{figure}

\begin{figure}
\centering
\includegraphics{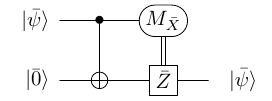}
\includegraphics{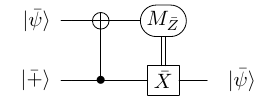}
\caption{The gadget $I_{xy}$ which takes a logical state $|\bar{\psi}\rangle$ onto an ancilla qubit. It can be implemented by either of the above circuits, depending on which direction of $\overline{\text{CNOT}}$ gate is available for the encoded qubits, $x$ and $y$. \label{Ixy}}
\end{figure}

\subsubsection{Logical Hadamard Operators}
\label{Appendix3e}
We now describe how logical Hadamard operators can be implemented on each of the three logical qubits to complete a universal gate set on these qubits. We consider first logical qubit three, since it allows the easiest implementation, and then describe the more complicated approach taken for qubits one and two.

To implement $\bar{H}_3$, we can implement the gadget $H_{xy}$ three times in a cycle then begins and ends on qubit three. In this way, the state, $|\bar{\psi}\rangle$, on qubit three will have $\bar{H}^3=\bar{H}$ applied to it and end the process back on qubit three. In particular, we can do this with ancilla qubits $a$ and $b$ by applying $H_{b3} H_{ab} H_{3a}$. This is possible since $\overline{\text{CZ}}_{b3}$ and $\overline{\text{CZ}}_{3a}$ are both implementable transversally. We can implement $\overline{\text{CZ}}_{ab}$ by first preparing qubit $b$ in state $|\bar{0}\rangle$, applying the transversal operator $\overline{\text{CZ}}_{12}\overline{\text{CZ}}_{ab}$, then preparing $b$ in $|\bar{+}\rangle$ and applying $\overline{\text{CZ}}_{12}\overline{\text{CZ}}_{ab}$ again. This has the effect of implementing $\overline{\text{CZ}}_{12}$ twice and so does not affect these qubits, but does implement $\overline{\text{CZ}}_{ab}$ on the ancilla qubits as necessary since the first implementation of $\overline{\text{CZ}}_{12}\overline{\text{CZ}}_{ab}$ acts trivially on these qubits. We can diagrammatically represent the process used to implement $\overline{H}_3$ in the way shown in Fig.~\ref{H3Diagram}. We can also draw the circuit used explicitly as shown in Fig.~\ref{H3Circuit}.

\begin{figure}
\centering
\includegraphics{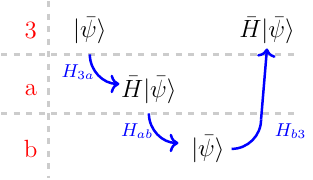}
\caption{Diagrammatic representation of how the sequence of gadgets $H_{b3} H_{ab} H_{3a}$ can be used to implement the logical operator $\bar{H}_3$.\label{H3Diagram}}
\end{figure}

\begin{figure}
\centering
\includegraphics[width=0.48\textwidth]{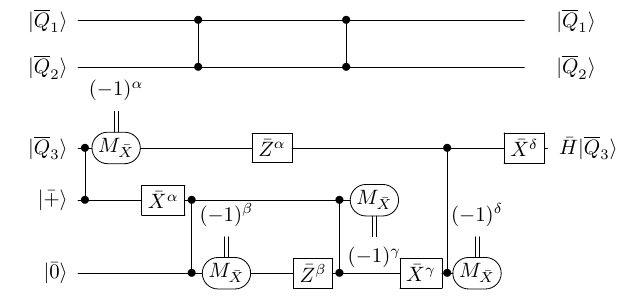}
\caption{Explicit circuit that can be used to implement $\bar{H}_3$. Double lines show the measurement outcome of logical measurements. The parameters specifying these outcomes are then used to control future logical operators. \label{H3Circuit}}
\end{figure}

For logical qubits 1 and 2 we are more limited in the entangling gates we can perform than for logical qubit 3. This means we must use a more complicated scheme to realise $\bar{H}_1$ and $\bar{H}_2$ than for $\bar{H}_3$. In particular, we can implement $\bar{H}_1$ by applying $H_{a3}H_{31}H_{b3}H_{3a}H_{ab}I_{1a}$. Similarly, we can implement $\bar{H}_2$ by applying $H_{b3}H_{32}H_{a3}H_{3b}H_{ab}I_{2b}$. Verifying that these indeed implement the required logical Hadamard operators is best done by considering them diagrammatically, as presented in Fig.~\ref{H1Diagram} and Fig.~\ref{H2Diagram}. The explicit circuits to implement this can be realised by combining together the circuits for each gadget.

\begin{figure}
\centering
\includegraphics{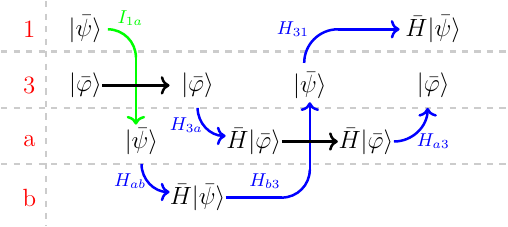}
\caption{Diagrammatic representation of how the sequence of gadgets $H_{a3}H_{31}H_{b3}H_{3a}H_{ab}I_{1a}$ can be used to implement the logical operator $\bar{H}_1$.\label{H1Diagram}}
\end{figure}

\begin{figure}
\centering
\includegraphics{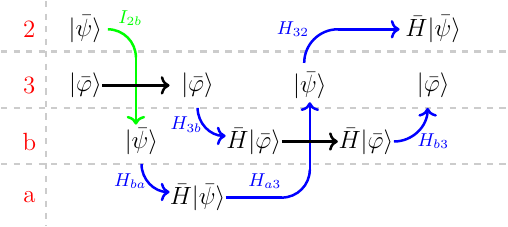}
\caption{Diagrammatic representation of how the sequence of gadgets $H_{b3}H_{32}H_{a3}H_{3b}H_{ab}I_{2b}$ can be used to implement the logical operator $\bar{H}_2$.\label{H2Diagram}}
\end{figure}

We note two interesting features of the circuits used to implement the logical Hadamard operators on logical qubits 1 and 2. Firstly, we note that both $\bar{H}_1$ and $\bar{H}_2$ rely on braiding defects, since they use the gadget $I_{xy}$ which involves performing a $\text{CNOT}$ operator by braiding. We also note that $\bar{H}_1$ and $\bar{H}_2$ both use a technique of shifting logical information around between encoded qubits as a tool to allow for the right gate to be implemented. In particular, the gadgets $H_{a3}$ and $H_{3a}$ in $\bar{H}_1$ (and similar for $\bar{H}_2$) are used only to move logical information from logical qubit 3 onto an ancilla qubit. This allows for logical qubit 3 to be used as an ancilla, while protecting the logical information of the qubit on qubit $a$. Such shifting of logical information around plays an important role here, and will also do so in the following section.
 
\subsection{Universal Gate Set for $N$ Qubits}
\label{Appendix4}
Having demonstrated that a universal gate set on three logical qubits can be realised with this defect setup, we now show that this may be generalised to give universality on an arbitrarily large number of logical qubits. This procedure uses only locality-preserving logical operators and braiding supplemented by logical measurement and classical processing. In particular, it does not require additional techniques such as lattice surgery to transfer information between different code blocks, or non-local operators.

We add more logical qubits to the scheme by adding more pairs of defects identical to $h_1$ and $h_2$, which also have the threaded defect threaded through them. Each new defect pair then adds two more logical qubits defined analogously to logical qubits one and two, but using this new pair of defects. In this way, we can construct an encoding of $N$ logical qubits in a setup with a total of $N+2$ punctures (for odd $N\geq 3$). We note that we will label the qubit known as qubit three in the three-qubit case (encoded using the threaded defect) by qubit $N$, and label the pairs of logical qubits encoded in the $k$th pair of defects by qubit $2k-1$ and $2k$. Note also that we may implement the gadgets $I_{2K-1,a}$ and $I_{2K,a}$ for any $k$. Indeed, braiding one of the punctures in the $k$th pair around hole $\alpha$ with qubit $b$ prepared in state $|\bar{+}\rangle$ allows the necessary entangling gate, and logical measurement and Pauli operators can be applied as before.

To justify that we have a universal gate set on this full set of logical qubits, we must now show that our scheme allows for fault-tolerant Hadamard logical operators on each logical qubit, and $\overline{\text{CCZ}}$ operators on each subset of three logical qubits.

We consider $\overline{\text{CCZ}}$ first. Specifically, with $N+2$ punctures, we now have that transversal CCZ acts act as in Eq.~\ref{A1} (where $\overline{\text{CCZ}}_{a,b,c}$ indicates that $CCZ$ acts on qubits labelled $a$, $b$ and $c$).
\begin{equation}\label{A1}
\overline{\text{CCZ}}_{a,b,N} \prod_{k=1}^{\frac{N-1}{2}} \overline{\text{CCZ}}_{2k-1,2k,N}
\end{equation}
To implement $\overline{\text{CCZ}}_{2k-1,2k,N}$, we must thus isolate out the logical qubits we wish to act on. This can be done by using the gadget $I_{2K-1,a}$ to move logical information on qubit $K$ temporarily onto the ancilla qubit $a$. Note now that qubits $2K-1$ and $b$ now do not carry logical information. Thus, we can prepare both these qubits in $|\bar{0}\rangle$. This then means that $\overline{\text{CCZ}}_{a,b,N} \overline{\text{CCZ}}_{2K-1,2K,N}$ acts trivially, and so applying $CCZ$ transversally implements the logical operator in Eq.~\ref{A2}.
\begin{equation}\label{A2}
\prod_{k=1}^{K-1}\overline{\text{CCZ}}_{2k-1,2k,N} \prod_{k=K+1}^{\frac{N-1}{2}} \overline{\text{CCZ}}_{2k-1,2k,N}
\end{equation}
We can then return the logical information being stored in qubit $a$ back to qubit $2K-1$ by applying the gadget $I_{a,2K-1}$. With $b$ still in state $|\bar{0}\rangle$ we can now implement $CCZ$ transversally again. This then implements the operator in Eq.~\ref{A3}.
\begin{equation}\label{A3}
\prod_{k=1}^{\frac{N-1}{2}} \overline{\text{CCZ}}_{2k-1,2k,N}
\end{equation} 
Thus, the net effect of applying all of these operations is that the required operator $\overline{CCZ}_{2K-1,2K,N}$ has been applied, since this is equivalent to the operator in Eq.~\ref{A4}.
\begin{equation}\label{A4}
\prod_{k=1}^{K-1} \overline{\text{CCZ}}_{2k-1,2k,N} \prod_{k=K+1}^{\frac{N-1}{2}} \overline{\text{CCZ}}_{2k-1,2k,N} \prod_{k=1}^{\frac{N-1}{2}}\overline{CCZ}_{2k-1,2k,N}
\end{equation}
The full circuit used to perform this operator, in the case $N=5$ is shown in Fig.~\ref{CCZ}.

\begin{figure}
\includegraphics[width=0.48\textwidth]{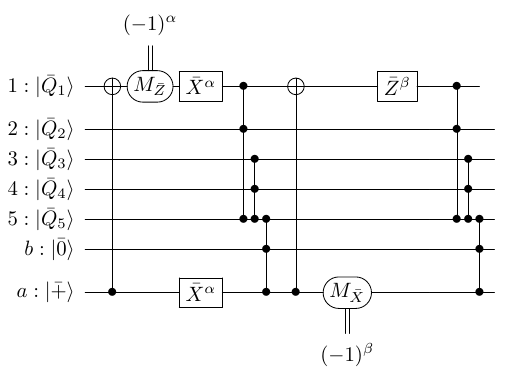}
\caption{Circuit that implements $\overline{\text{CCZ}}_{125}$ in the case of the scheme with five logical qubits.\label{CCZ}}
\end{figure}

We now consider how to implement logical $\bar{H}$ operators on each logical qubit, $x$. To do this, we begin by noting that we may still apply $\overline{\text{CZ}}_{N,x}$ (and, hence, the gadget $H_{N,x}$) for any logical qubit or ancilla $x \neq N$. We now must just check that we can implement $\bar{H}_{ab}$ and $\bar{H}_{ba}$ to be able to reconstruct the circuits used to implement the logical Hadamard in the three logical qubit case. We note that this can be done relatively simply. In particular, note that applying transversal $\text{CZ}_{12}$ across a membrane contained by the threaded defect now implements the operator in Eq.~\ref{A5}.
\begin{equation}\label{A5}
\overline{\text{CZ}}_{ab} \prod_{k=1}^{\frac{N-1}{2}} \overline{\text{CZ}}_{2k-1,2k}
\end{equation}
By first preparing the target ancilla qubit (i.e.~$b$ for $H_{ab}$ or $a$ for $H_{ba}$) in $|\bar{0}\rangle$, we can ensure that this becomes $\prod_{k=1}^{\frac{N-1}{2}} \text{CZ}_{2k-1,2k}$. Preparing that qubit now in $|\bar{+}\rangle$ and applying $CZ_{12}$ transversally again implements the operator in Eq.~\ref{A6}.
\begin{equation}\label{A6}
\overline{CZ}_{ab} \prod_{k=1}^{\frac{N-1}{2}} \text{CZ}_{2k-1,2k}
\end{equation}
The net effect is then $\overline{\text{CZ}}_{ab}$. This allows us to implement $H_{ab}$ or $H_{ba}$ and so to realise the circuit for implementing logical Hadamards described in the previous subsection.

To conclude, we now note that we may swap any logical qubit, $x$, with qubit $N$ by the operator in Eq.~\ref{A7}.
\begin{equation}\label{A7}
\text{SWAP}_{x,N}=\bar{H}_N \overline{\text{CZ}}_{x,N}\bar{H}_N\bar{H}_x \overline{\text{CZ}}_{x,N}\bar{H}_x\bar{H}_N \overline{\text{CZ}}_{x,N}\bar{H}_N
\end{equation}
This means that we may swap any two logical qubits $x$ and $y$ by implementing $\text{SWAP}_{x,N} \text{SWAP}_{y,N}  \text{SWAP}_{x,N}$. Thus, we may implement $\overline{\text{CCZ}}_{xyz}$ between any three logical qubits $x,y,z$.

\end{document}